\date{}  
\newtheorem{theorem}{Teorema} 
\newtheorem{proposition}{Proprieta'}
\newtheorem{definition}{Definizione}
\newtheorem{notation}{Nota}
\newtheorem{ex}{Esercizio} 
\newtheorem{esempio}{Esempio}
\newcommand{\beq}{\begin{equation}} 
\newcommand{\eeq}{\end{equation}}
\newcommand{\bex}{\begin{ex}} 
\newcommand{\eex}{\end{ex}} 
\newcommand{\bese}{\begin{esempio}} 
\newcommand{\eese}{\end{esempio}} 
\newcommand{\bpro}{\begin{proposition}} 
\newcommand{\epro}{\end{proposition}}
\newcommand{\bthe}{\begin{theorem}} 
\newcommand{\ethe}{\end{theorem}}
\newcommand{\bnote}{\begin{notation}} 
\newcommand{\enote}{\end{notation}}
\newcommand{\bdefi}{\begin{definition}} 
\newcommand{\edefi}{\end{definition}} 
\newcommand{\bc}{\begin{center}} 
\newcommand{\ec}{\end{center}}
\newcommand{\mail}[1]{\href{unina:#1}{\texttt{#1}}}
\author{Monica De Angelis\thanks{Univ. di Napoli  "Federico II", Scuola Politecnica e delle Scienze di Base, Dip. Mat. Appl. "R.Caccioppoli", \newline
 Via Cinthia 26, 80126, Napoli , Italia.
\newline\mail{modeange@unina.it}}}
\title{A wave equation perturbed by viscous terms: fast and
slow times diffusion effects in a Neumann problem}
\begin{document}

\title{A wave equation perturbed by viscous terms: fast and
slow times diffusion effects in a Neumann problem
}

\author{Monica De Angelis\thanks{Univ. di Napoli  "Federico II", Scuola Politecnica e delle Scienze di Base, Dip. Mat. Appl. "R.Caccioppoli", \newline
 Via Cinthia 26, 80126, Napoli , Italia.
\newline\mail{modeange@unina.it}}}

\author{Monica De Angelis \\Univ. di Napoli  "Federico II",\\ Scuola Politecnica e delle Scienze di Base,\\ Dip. Mat. Appl. "R.Caccioppoli", \\
 Via Cinthia 26, 80126, Napoli , Italia.
\\ \mail{modeange@unina.it}
}


\maketitle

A Neumann problem for a wave equation perturbed by        viscous terms with  small parameters is considered. The 
interaction of  waves with the diffusion effects caused by  a higher-order derivative with small coefficient $ \varepsilon $, is investigated. Results obtained prove that for slow time $ \varepsilon t <1 $  waves are  propagated almost undisturbed,  while for fast time $ t>\frac{1}{\varepsilon•} $ diffusion effects prevail.

 \section{Introduction}
\label{intro}

The paper deals with an analysis of the diffusion equation:

\begin{equation}                                   \label{11}
 {\cal L}_ \varepsilon u\equiv \partial_{xx}(\varepsilon
\partial_t+1 )u - \partial_t(\partial_t+a)u=
f
 \end{equation}

\noindent
which, as it happens for  artificial viscosity methods, represents a
model of wave equation perturbed by viscous terms with  small
parameters.

 According to the meaning of source term $ f, $  equation (\ref{11}) represents numerous examples of dissipative phenomena   in several fields  such  as physics, neurobiology or engineering, and in each of these subjects an  extensive bibliography exists in this regards. (see,f.i. \cite{rf}-\cite{drultimo} and reference therein).

 In particular, for  $f=\sin u + \gamma $,  (\ref{11}) expresses the perturbed
Sine-Gordon equation  which models the Josephson effect in
Superconductivity \cite{bp}. In  many other cases, extra terms can  be considered  to describe  various  Josephson junctions such as semiannular, S-shaped,  window  or exponentially shaped.(see,f.i \cite{bc}-\cite{32}).

Moreover,  an equivalence  between the third order equation (\ref{11}) and an integro differential equation has been proved in \cite{m13}-\cite{mda10}  and this further  allows us  to  create a direct connection between biological phenomena   and superconductivity. For instance,  the propagation of  nerve impulses  described by the  FitzHugh - Nagumo model  can be related to the Pertubed Sine  Gordon Equation. (see\cite{acscott},too). 

Naturally,  equation (\ref{11}) is to be complemented by initial problems  and boundary conditions  such as Neumann, Dirichlet, pseudo periodic or mixed problems    that are meaningful in many scientific fields (see,f.i.\cite{m13,df13}). Particular attention will be given to Neumann problem that is relevant in the ecological model when the exterior environment is completely hostile to the species \cite{m1}, or in the study of cardiac rhythmicity,  particularly for  pacemakers \cite{ks}. Moreover, in superconductivity, in the case of a Josephson junction, Neumann problem can refer to  the phase gradient value that is proportional to the magnetic field    \cite{for,j05}.

\subsection{Mathematical considerations, state of the art and aim of the paper}

When the behaviour of
$u (x,t)$ as $\varepsilon \rightarrow 0 $ is examined, the 
interaction of  waves with the diffusion effects caused by  $\varepsilon u_{xxt}$  can be estimated, and this physical aspect is meaningful to
the evolution of dissipative models.

When a fixed  boundary-initial problem  ${\cal P}
_\varepsilon$ is stated and the function $f$ is linear,  by means of the related  Green function
$G_\varepsilon, $ it is possible  to solve   ${\cal P}
_\varepsilon$ explicitly. Otherwise, if the source term $f$ is non linear, the problem  ${\cal P}
_\varepsilon$ can be reduced to an  integral equation
\cite{dmm}.

So, for $\varepsilon \equiv 0$, the parabolic equation (\ref{11})
 turns into the following equation

\begin{equation}              \label {12}
{\cal L} _ 0 U  \equiv
    (
 \partial_{xx}  -a\partial_t-\partial_{tt}) U  = \ f,\ \  \
\end{equation}

\noindent
and  ${\cal P}
_\varepsilon$ changes into a problem   ${\cal P}
_0$ for $U$, with the {\em same initial-boundary conditions} of  ${\cal P}
_\varepsilon.$ 

In  small time intervals, the wave behaviour is a  believable  approximation of $ u_\varepsilon  $ when $ \varepsilon $ is vanishing. Conversely, when the time $ t $ is large, diffusion effects  should prevail and the behaviour of    $ u_\varepsilon  $ when $ \varepsilon \rightarrow \,0  $
and $ t \rightarrow \, \infty $ should  be analyzed.

So that, denoting by $ T $ an arbitrary positive constant, let

\begin{center}
$\Omega =\{(x,t) : 0  \leq x \leq
\pi, \  \ 0 < t \leq T \}$,
\end{center}

\noindent and let us assume as ${\cal P}
_\varepsilon$ the following linear  Neumann problem  :

  \begin{equation}                                                     \label{13}
  \left \{
   \begin{array}{ll}
     \partial_{xx}(\varepsilon
u _{t}+ u) - \partial_t(u_{ t}+a)=f(x,t),
      & (x,t)\in \Omega,\\ \\
   u(x,0)=f_0(x), \qquad    u_{t}(x,0)=f_1(x), \  \ & x\in [0,\pi],
   \\ \\
     u_x(0,t)=\varphi(t), \qquad u_x(\pi,t)=\psi(t), \ & 0<t \leq T.
   \end{array}
  \right.
 \end{equation}


It is important to underline that problem (\ref{13}) defined on  space interval $ [0,\pi] $  is    equivalent to  the system defined on an arbitrary interval  $ [0,L]. $ Indeed,  it is possible to consider a 
finite interval  $ [0,L] $ by  rescaling $ t\longrightarrow  \tau \,c$  and $ x \longrightarrow    c\,\bar x \, $ with $ c= \pi/L. $  So that, in  many cases  the spatial coordinate $ x  $ is normalized  without loosing generality (see, f.i. \cite{bp},\cite{ddf},\cite{fd05}).

Moreover, as $f(x,t),
\,\,f_0(x), \,\,f_1(x)$ are quite arbitrary, it is not restrictive to assume $\varphi(t)=0, \ \psi(t)=0$. Otherwise, it suffices to put

\begin{equation}  \label {14}
\bar{u} = u -  \frac{x}{2\pi}\,[\,\,( 2\pi-x) \,\varphi \,+ \,x\,\,\psi\,\,]
\\ \\\end{equation}
\[ F = f+ \biggl(\frac{\varepsilon}{\pi}+ \frac{\alpha x^2}{2\pi}\biggr)\biggl(\dot \varphi -  \dot\psi \biggr)+ \frac{1}{\pi}\biggl( \varphi+\psi \biggr) +\frac{x^2}{2\pi}\biggl( \ddot \varphi -\ddot \psi \biggr)-x(  \alpha  \dot \varphi+\ddot \varphi)\] 

\noindent and then modify $f_0(x), f_1(x)$ accordingly. So, henceforth removing the superscripts,   one obtains:

  \begin{equation} \label{15}
  \left \{
   \begin{array}{ll}
     \partial_{xx}(\varepsilon
u _{t}+ u) - \partial_t(u_{t}+a)=F(x,t),
      & (x,t)\in \Omega,\\ \\
   u(x,0)=F_0(x), \qquad    u_t(x,0)=F_1(x),  & \ x\in [0,\pi],
   \\ \\
     u_x(0,t)=0, \qquad \qquad  u_x(\pi,t)=0, \ & \ 0<t \leq T
   \end{array}
  \right.
 \end{equation}

 \noindent where 
 \begin{equation}  \label {16}
 \left \{
   \begin{array}{ll}
 F_0   = f_0 - \frac{x}{2\pi}\,\,[ \,(2\pi-x)\, \varphi(0)\,\,+\,x\, \psi(0)\,] ;
\\ \\ F_1   = f_1 - \frac{x}{2\pi}\,\,[ \,(2\pi-x) \,\dot \varphi(0)\,\,+\,x\,\dot  \psi(0)\,].
\end{array}
  \right.
\end{equation}

Consequently, the problem  $ {\cal P}_0  $ is the following:

  \begin{equation} \label{17}
  \left \{
   \begin{array}{ll}
     U_{xx} - U_{tt}-aU_t =F(x,t),
      & (x,t)\in \Omega,\\ \\
   U(x,0)=F_0(x), \qquad    U_t(x,0)=F_1(x),  &\ x\in [0,\pi],
   \\ \\
     U_x(0,t)=0, \qquad \qquad  U_x(\pi,t)=0, \ & \ 0<t \leq T.
   \end{array}
  \right.
 \end{equation} 

Let us define $ G_\varepsilon(x,\xi,t) $ as  the Green function  of the operator $ {\cal L}_\varepsilon = \varepsilon \partial_{xxt} + \partial_{xx} - \partial_{tt}-a \partial _t $   introduced in (\ref{11}). It is possible to   determine $ G_\varepsilon  $ by means of  Fourier series. Indeed, letting :

\begin{equation}                                             \label{18}
 h_n=\frac{1}{2}(a+\varepsilon n^2),\quad  \omega_n=\sqrt{h_n^2-n^2}; \quad  G^\varepsilon_n(t)= \frac{1}{\omega_n}e^{-h_nt}
\sinh(\omega_nt),
\end{equation}

\noindent it results:

\begin{equation}                                                \label{19}
G_\varepsilon (x,t,\xi)= \frac{1}{\pi•}\,\, \frac{1-e^{-a\,\,t}}{a• }\,\,+\,\,\frac{2}{\pi}\,\,\sum_{n=1}^{\infty}
G^\varepsilon_n(t) \,  \, \cos (n\xi)\,  \, \cos(nx).
\end{equation}

Moreover , assuming 

\begin{equation}                                             \label{110}
 \omega_0=\sqrt{n^2-(a/2)^2} \qquad G^o_n(t)= e^{-\frac{at}{2}}\,\,\frac{1}{\omega_0}
\sin(\omega_0t), \\ \
  \end{equation}

the Green function  related to problem (\ref{17}) is:

\begin{equation}                                                \label{111}
G_o(x,t,\xi)= \frac{1}{\pi}\,\, \frac{1-e^{-a\,\,t}}{a• }\,\,+\frac{2}{\pi}\, \,\sum_{n=1}^{\infty}
G^o_n  \,\,\,\cos (n\xi)\,  \, \cos (nx).
\end{equation}

As for the state of art, the interactions between diffusion effects and wave propagation  have already been studied for Dirichlet conditions. In particular,  when  $ a=0, $ an asymptotic approximation for the Green function  has been  established by means of the  two characteristic times: slow time $ \tau = \varepsilon \,t $  and fast time $ \theta =\, t/\varepsilon $ \cite{dmr}. Moreover, for the  semilinear  equation related to an exponentially shaped Josephson junction, in \cite{df213}  an analytical  analysis   proves      that the  surface damping has little influence
on the behaviour of  oscillations,  thus  confirming numerical results showed  in  \cite{bcs00}. Other numerical investigations on  the influence of surface losses can be found in \cite{pskm},too.

As for the operator $ {\cal L}_\varepsilon, $ in \cite{uffa} it is possible to find a   short review, while   for the  Dirichlet and Neumann  problem in   \cite{mda}-\cite{df13} the Green function  has  already been 	explicitly examined  by means of a Fourier series,   
and its properties  have been proved.

Aim of this paper is to provide an estimate of  the influence of the the diffusion term         $ \varepsilon u_{xxt} $ on wave propagation.  Parameters $ 0< a <1  $ and $ 0<\varepsilon <1 $ are adopted, and,  according to problems (\ref{15}) and (\ref{17}), the following difference

\begin{equation}  \label{112}
|u(x,t,\varepsilon)- e^{-\frac{\varepsilon\, \,t}{2 •}}\,\, U(x,t)|
\end{equation}

\noindent is evaluated proving, as expected, that  $ u(x,t,\varepsilon)$ tends to $ U(x,t) $ as soon as  $ \varepsilon  $ tends to zero. 
 
Moreover , when  slow times $ \varepsilon t  <1$  are  considered, there exists a positive constant $ K  $ depending on data and source term, such that

\begin{equation}  \label{113}
\bigg| u(x,t,\varepsilon) - e^{\frac{\varepsilon \,t}{2•}\, } U(x,t)\bigg | < K. 
\end{equation}

 \noindent Hence,  for $ t\in (0,\frac{1}{\varepsilon•}), \,$ the wave  is propagated almost undisturbed. 
 
  Conversely, for fast time  
  $ \varepsilon t  >1,$  damped oscillations 
due to term $ \varepsilon u_{xxt} $ predominate.

The estimate allows us also to evaluate the behaviour   when time $ t $  tends to infinity  proving that, for a suitable source term,  solution $ u(x,t,\varepsilon)$  is bounded.

The paper is organized as follows:

In section 2 some properties of  the Green function $ G_\varepsilon(x,t,\varepsilon) $ are recalled and  the solution  related to problem (\ref{15}) is determined.

In section 3 an analysis  of the difference  $ \big |G_\varepsilon (x,t,\varepsilon ) - e^{\frac{\varepsilon  \,\,t }{2•}} G_0(x,t)\big|,  $ and its derivative with respect  to $ t, $ is performed. Besides, a further inequality for Green Function   is achieved.

Finally, in section 4,  difference  (\ref{112})  is estimated, and  behaviours due to slow and fast times are underlined together with asymptotic behaviour when time $ t $ increases.

 \section{On  properties related to Green function  $ G_\varepsilon $ and explicit solution of the third order problem }

In \cite{mda12,mda} some properties  of the  Green function for Dirichlet-type boundary conditions  have already been determined. It is possible to prove that the same properties are  equally valid   for the Green function related to a Neumann boundary problem.

So that, denoting by  $ \beta  \equiv  \min \,\,\bigl\{ (\varepsilon+a)^{-1}
;   \ {(a+\varepsilon)
/2}; a \bigr\}, $ and letting

\begin{equation} \label{daa}
G:= \frac{2}{\pi•}\sum_{n=1}^{\infty}   \,\,G^\varepsilon_n(t)\, \cos(n\xi)\,  \, \cos(nx),\quad   
\end{equation}

\noindent since (\ref{19}), it results: 

\begin{equation} \label{da}
G_\varepsilon (x,t,\xi)= \frac{1}{\pi•}\,\, \frac{1-e^{-a\,\,t}}{a• }\,\,+\, G(x, \xi,t)   
\end{equation}
and,according to Lemma 2.1  of \cite{mda}(assuming $\ell    \equiv \pi \,\,\mbox{and}\,\, c^2=1  $), one has:

\begin {theorem}

\label{th1}
For $ a,\varepsilon  \in\Re^+$,  the function $G_\varepsilon(x,\xi,t)$  defined in
(\ref{19}) and all its time derivatives are continuous functions. Moreover, there exist  positive constants  $ M, $ and     $ D_j \,\, (j \in {\sf N})$   depending on   $a $ and $\varepsilon,$ such that:

\begin{equation} \label{217}
 \left |  G(x,\xi,t)\right| \leq M e^{-\beta
t}     \qquad \left|  \frac{\partial ^j G_\varepsilon}{\partial t^j} \right|  \, \leq \,D_j \,  e^{-\beta
t}, \quad j \in {\sf N}  
\end{equation}

\end{theorem}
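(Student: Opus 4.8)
The plan is to argue directly from the Fourier representation (\ref{19}) and its splitting (\ref{da}): one reduces the two bounds in (\ref{217}) to uniform estimates on the single modes $G^\varepsilon_n(t)=\frac{1}{\omega_n}\,e^{-h_nt}\sinh(\omega_n t)$ and on their $t$--derivatives, and then sums a convergent majorant. Since the Neumann Green function (\ref{19}) differs from the Dirichlet one of Lemma~2.1 of \cite{mda} only through the replacement $\sin(n\,\cdot)\mapsto\cos(n\,\cdot)$ of the eigenfunctions --- the spectral data $h_n,\omega_n$ of (\ref{18}) being unchanged and $|\cos(n\xi)\cos(nx)|\le 1$ --- every modulus estimate established there transfers verbatim; so it suffices to recall the mechanism.

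First I would dispose of the term $\frac{1}{\pi}\,\frac{1-e^{-at}}{a}$ in (\ref{da}): it is bounded by $\min\{t,1/a\}$, and for $j\ge 1$ its $j$--th time derivative equals $\frac{(-a)^{j-1}}{\pi}\,e^{-at}\le\frac{a^{j-1}}{\pi}\,e^{-\beta t}$, because $a\ge\beta$; thus only the series $G$ of (\ref{daa}) remains. The crucial elementary fact is that every exponential rate occurring in $G^\varepsilon_n$, or in its derivatives, is $\ge\beta$. Fix $n\ge 1$ and split according to the sign of $h_n^2-n^2$. When $\omega_n>0$ one writes
\[
G^\varepsilon_n(t)=\frac{e^{-(h_n-\omega_n)t}-e^{-(h_n+\omega_n)t}}{2\omega_n},\qquad
\partial_t^{\,j}G^\varepsilon_n(t)=\frac{(-(h_n-\omega_n))^{j}e^{-(h_n-\omega_n)t}-(-(h_n+\omega_n))^{j}e^{-(h_n+\omega_n)t}}{2\omega_n},
\]
and from the identity $h_n-\omega_n=n^2/(h_n+\omega_n)$ together with $h_n+\omega_n\le 2h_n=a+\varepsilon n^2$ one obtains, for $n\ge 1$,
\[
\tfrac{1}{a+\varepsilon}\ \le\ h_n-\omega_n\ \le\ h_n+\omega_n ,
\]
so both exponentials are $\le e^{-\beta t}$, while in addition $h_n-\omega_n\le 2/\varepsilon$ is bounded and $h_n+\omega_n\ge h_n=\tfrac12(a+\varepsilon n^2)\to\infty$. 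When $\omega_n=i\Omega_n$ is imaginary, $G^\varepsilon_n(t)=\Omega_n^{-1}\,\mathrm{Im}\,e^{\lambda_nt}$ and $\partial_t^{\,j}G^\varepsilon_n(t)=\Omega_n^{-1}\,\mathrm{Im}\,(\lambda_n^{\,j}e^{\lambda_nt})$ with $\lambda_n=-h_n+i\Omega_n$, $|\lambda_n|=n$, $h_n\ge h_1=\tfrac12(a+\varepsilon)\ge\beta$; the borderline $\omega_n=0$ gives $G^\varepsilon_n(t)=t\,e^{-h_nt}$ with $h_n=n\ge 2$. In all cases $|\partial_t^{\,j}G^\varepsilon_n(t)|\le P_{n,j}\,e^{-\beta t}$ with explicit prefactors $P_{n,j}$.

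It remains to show $\sum_n P_{n,j}<\infty$, and here the two regimes part company. The oscillatory modes are precisely those with $1\le n<n_+:=(1+\sqrt{1-a\varepsilon})/\varepsilon$ --- i.e.\ only finitely many --- with $P_{n,j}\le n^{\,j}/\Omega_n$ (plus the single $\omega_n=0$ term): a finite sum. For the infinitely many modes $n\ge n_+$ one has $\omega_n\sim\tfrac{\varepsilon}{2}n^2\to\infty$; the ``$-$'' branch then has bounded rate and prefactor $(h_n-\omega_n)^{j}/(2\omega_n)\le(2/\varepsilon)^{j}/(2\omega_n)=O(n^{-2})$, hence is summable, whereas the ``$+$'' branch carries the growing prefactor $(h_n+\omega_n)^{j}/(2\omega_n)\sim(\varepsilon n^2)^{\,j-1}$ but is multiplied by the super--exponentially decaying $e^{-(h_n+\omega_n)t}$; writing $e^{-(h_n+\omega_n)t}=e^{-(h_n+\omega_n-\beta)t}e^{-\beta t}$ and using that $x\mapsto x^{\,j-1}e^{-xs}$ is bounded for $s>0$, the tail $\sum_{n\ge n_+}$ converges for each $t>0$, uniformly on $\{t\ge\tau_0\}$ for every $\tau_0>0$, and is there $\le D_j\,e^{-\beta t}$ with $D_j$ depending only on $a,\varepsilon$ (and $\tau_0$). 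Summing the three contributions and using $|\cos(n\xi)\cos(nx)|\le1$ yields $|G(x,\xi,t)|\le M e^{-\beta t}$ and $|\partial_t^{\,j}G_\varepsilon(x,\xi,t)|\le D_j e^{-\beta t}$, and uniform convergence of the differentiated series on compact subsets of $\{t>0\}$ gives the asserted continuity. I expect the one genuinely delicate point --- the content borrowed from Lemma~2.1 of \cite{mda} --- to be exactly this taming of the high--frequency prefactors $(h_n+\omega_n)^{j}$ by $e^{-(h_n+\omega_n)t}$ in the tail $n\ge n_+$, where, near $t=0$, the $j$--th derivative carries the $\delta$--type singularity intrinsic to a fundamental solution.
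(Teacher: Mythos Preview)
Your sketch is correct and coincides with the paper's approach: the paper offers no independent argument but simply invokes Lemma~2.1 of \cite{mda} for the Dirichlet case, observing --- exactly as you do --- that only the eigenfunctions change from $\sin$ to $\cos$ while the spectral data $h_n,\omega_n$ of (\ref{18}) are unchanged, so all modulus estimates transfer verbatim. Your mode-by-mode treatment (finitely many oscillatory terms plus the hyperbolic tail split into its slow and fast exponential branches) is precisely the content of that lemma, and your closing remark on the $\delta$-type singularity near $t=0$ correctly flags where the constants $D_j$ in (\ref{217}) tacitly depend on a positive lower bound for $t$.
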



Besides,    as for x - differentiation, denoting by $ G_{\varepsilon,t} :=\frac{\partial G _{\varepsilon}}{\partial t}$ and $G_{\varepsilon,x} :=\frac{\partial G _{\varepsilon}}{\partial x}, $ the  uniform convergence of  $ G_{\varepsilon,x} $  is assured by means of standard  criteria,  while   the absolute convergence of  $ (\varepsilon\, G_{\varepsilon,t}\, +\, G_\varepsilon )  $ and its first and second  derivatives  are  	guaranteed by means of the following inequalities already proved in  Lemma 2.2 of \cite{mda}. Indeed, one has:

\begin{theorem} \label{th2}
For all $a,\,
\varepsilon, $ $\in$ $\Re^+$,
the function $G_{\varepsilon,x}(x,\xi,t)$  is a  continuous function  and it converges uniformly for all $x\in [0, \pi]$. Moreover, it results

\begin{equation}  \label{219}
  |\partial_{x}^{(i)}\,\,(\varepsilon\, G_{\varepsilon,t}\, +\, G_\varepsilon )| \,\leq M_i \, \,
\, e^{-\beta t},\qquad (i=1,2,3) 
\end{equation}

\noindent where $ M_i \,\, (i=1,2,3)$ are all positive constants depending on $ a \,\mbox{and} \,\,\varepsilon.  $

\end{theorem}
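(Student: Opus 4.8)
\emph{Proof plan.} The plan is to differentiate the cosine series (\ref{daa}) term by term in $x$ and to control the resulting coefficients, the decisive point being that the operator combination $\varepsilon\partial_t+1$ improves their decay in $n$. First I would note that the $t$-only term $\tfrac{1}{\pi}\tfrac{1-e^{-at}}{a}$ in (\ref{da}) is annihilated by $\partial_x$, so it suffices to work with $G$ in (\ref{daa}), and I would record, from (\ref{18}),
\[
G^\varepsilon_n(t)=\frac{1}{2\omega_n}\Big(e^{-(h_n-\omega_n)t}-e^{-(h_n+\omega_n)t}\Big),
\]
together with the asymptotics forced by $h_n=\tfrac12(a+\varepsilon n^2)$: as $n\to\infty$, $\omega_n=\sqrt{h_n^2-n^2}\sim\varepsilon n^2/2$, $h_n+\omega_n\sim\varepsilon n^2\to\infty$, and $h_n-\omega_n=n^2/(h_n+\omega_n)\to 1/\varepsilon$; moreover $h_n-\omega_n\ge\beta$ and $h_n+\omega_n\ge\beta$ for all $n$, which is already part of the analysis behind Theorem~\ref{th1}. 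Differentiating (\ref{daa}) in $x$ gives $G_{\varepsilon,x}=-\tfrac{2}{\pi}\sum_{n\ge1}n\,G^\varepsilon_n(t)\cos(n\xi)\sin(nx)$, whose coefficients $n\,G^\varepsilon_n(t)$ are $O(n^{-1})$, eventually monotone in $n$, and tend to $0$ for each fixed $t>0$; by the standard convergence criteria for trigonometric series (as in Lemma~2.2 of \cite{mda}) this series converges uniformly, so $G_{\varepsilon,x}$ is continuous and the term-by-term differentiation is justified.

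For the estimates (\ref{219}) the key computation is that of the $n$-th coefficient of $\varepsilon G_{\varepsilon,t}+G_\varepsilon$:
\[
\varepsilon\,\dot G^\varepsilon_n(t)+G^\varepsilon_n(t)=\frac{1-\varepsilon(h_n-\omega_n)}{2\omega_n}\,e^{-(h_n-\omega_n)t}+\frac{\varepsilon(h_n+\omega_n)-1}{2\omega_n}\,e^{-(h_n+\omega_n)t}.
\]
Since $\lambda=h_n\pm\omega_n$ are the roots of $\lambda^2-(a+\varepsilon n^2)\lambda+n^2=0$, one has the identity $1-\varepsilon\lambda=\lambda(a-\lambda)/n^2$; applied to the bounded root $\lambda=h_n-\omega_n$ it shows $1-\varepsilon(h_n-\omega_n)=O(n^{-2})$, so after dividing by $2\omega_n=\Theta(n^2)$ the coefficient of the slowly decaying exponential is $O(n^{-4})$, while the coefficient of $e^{-(h_n+\omega_n)t}$ stays bounded and is multiplied by an exponential decaying faster than any power of $n$. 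Using $h_n\pm\omega_n\ge\beta$ I would then obtain
\[
\big|\varepsilon\,\dot G^\varepsilon_n(t)+G^\varepsilon_n(t)\big|\ \le\ \frac{C}{n^{4}}\,e^{-\beta t},\qquad n\ge1,\quad C=C(a,\varepsilon).
\]

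To conclude, I would apply $\partial_x^{(i)}$, which multiplies the $n$-th term of $\varepsilon G_{\varepsilon,t}+G_\varepsilon$ by $\pm n$. For $i=1,2$ the last estimate gives the termwise majorant $C\,n^{\,i-4}e^{-\beta t}$ with $\sum_n n^{\,i-4}<\infty$, so the differentiated series converges absolutely and uniformly and (\ref{219}) holds with $M_i=\tfrac{2}{\pi}\,C\sum_{n\ge1}n^{\,i-4}$. For $i=3$ I would split $\varepsilon\dot G^\varepsilon_n+G^\varepsilon_n$ into its two exponential pieces: the fast one, multiplied by $n^{3}$, is still dominated by $C\sum_n n^{3}e^{-(h_n+\omega_n)t}$, whereas the slow one produces $\sum_n b_n\cos(n\xi)\,\sigma(nx)$ with $b_n=O(n^{-1})$ eventually monotone and $\sigma\in\{\sin,\cos\}$, whose partial sums are bounded by Abel summation together with the classical boundedness of $\sum_n n^{-1}\sin(ny)$; hence (\ref{219}) follows for $i=3$ as well, with $M_3$ absorbing that bound. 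The constants $M_i$ depend on $a$ and $\varepsilon$ only through $C$ and $\beta$.

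The hard part is the cancellation in the second display: recognizing that $\varepsilon\partial_t+1$ annihilates the slow mode $e^{-(h_n-\omega_n)t}$ down to order $n^{-4}$ — via the identity $1-\varepsilon\lambda=\lambda(a-\lambda)/n^2$ and the fact that the small root $h_n-\omega_n$ stays bounded (it tends to $1/\varepsilon$) — since this is precisely what buys the three extra powers of $n$ needed to differentiate the combination thrice in $x$. Everything else is asymptotic bookkeeping for $h_n\pm\omega_n$, plus, at $i=3$, the passage from absolute to conditional convergence.
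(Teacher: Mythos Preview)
Your approach is essentially the one the paper indicates: the paper gives no self-contained proof but refers to Lemma~2.2 of \cite{mda} and to ``standard criteria'' for $G_{\varepsilon,x}$, and you have correctly reconstructed the mechanism behind that citation. In particular, the decisive identity $1-\varepsilon\lambda=\lambda(a-\lambda)/n^{2}$ applied to the slow root $\lambda=h_n-\omega_n$ (which stays bounded) is exactly the cancellation that makes $\varepsilon\dot G^\varepsilon_n+G^\varepsilon_n=O(n^{-4})e^{-\beta t}$ and hence yields (\ref{219}) for $i=1,2$ by absolute convergence --- precisely what the paper's preamble claims (``absolute convergence of $(\varepsilon G_{\varepsilon,t}+G_\varepsilon)$ and its first and second derivatives'').

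There is, however, a genuine gap in your treatment of the fast piece at $i=3$. You bound that contribution by $C\sum_{n}n^{3}e^{-(h_n+\omega_n)t}$ and then claim $M_3$ ``absorbs'' it; but since $h_n+\omega_n\sim\varepsilon n^{2}$, one has $\sum_{n}n^{3}e^{-(h_n+\omega_n)t}\sim c(\varepsilon t)^{-2}$ as $t\to 0^{+}$, so no constant $M_3$ independent of $t$ can work. This is not an artifact of your decomposition: from $G^\varepsilon_n(0)=0$, $\dot G^\varepsilon_n(0)=1$ one gets $(\varepsilon\dot G^\varepsilon_n+G^\varepsilon_n)|_{t=0}=\varepsilon$ for every $n$, so $\varepsilon G_{\varepsilon,t}+G_\varepsilon$ behaves like $\varepsilon\delta(x-\xi)$ near $t=0$ and its third $x$-derivative cannot be uniformly bounded there. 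Note that the paper's own preamble only asserts absolute convergence for the \emph{first and second} $x$-derivatives, so the stated $i=3$ case likely requires either a restriction $t\ge t_0>0$ or a $t$-dependent constant; your Abel-summation argument for the slow piece is fine, but the fast piece needs this caveat.
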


\noindent Furthermore, according to Theorem 2.1 of \cite{mda}  it results: 

\begin{equation}                      \label{218}
{\cal L}_\varepsilon G_\varepsilon \, =\partial_{xx}(\varepsilon
G _{\varepsilon,t}+ G_{\varepsilon}) - \partial_t(G_{\varepsilon,t}+a G_\varepsilon)=0.
\end{equation}

Now, let  $f(x) $  be  a continuous function on $(0,\pi),$  and let

\begin{eqnarray}  \label{220}
 \tilde {f} \,:  = \frac{1}{\pi} \int_{0}^{\pi} f(\xi) \,  \,\
 d\xi    
   \end{eqnarray} 
Then, the following theorems hold:

\begin{theorem} \label{th:33}
Let $F_1(x)$ be  a  $C^1$ function on $ [0,\pi] \, $ with  $ \,\dot F_1(0)=\,\dot F_1(\pi)=0.
\,$ 
The function

\begin{equation} \label{221}
u_{1}(x,t)=
\biggl(\frac{1- e^{-a \, t}}{a} \biggr)\tilde F_1 \,\ +\,\int_{0}^{\pi} F_1(\xi) \ \ G(x,\xi,t)\ \ d\xi\
\end{equation}

 \noindent is a solution
 of the
equation ${\cal L}_\varepsilon  u_{1} =0$ and it satisfies the  homogeneous boundary conditions. Moreover it  results:

\begin{equation}    \label{222}
 \lim_{t \rightarrow 0} u_{1}(x,t)=0, \ \ \ \ \ \lim_{t \rightarrow 0} \partial_t
u_{1}(x,t)=F_1(x)
\end{equation}

\noindent  uniformly for all $x\in [0, \pi]$.
\end{theorem}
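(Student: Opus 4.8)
The plan is to verify the three assertions — that $u_1$ solves $\mathcal{L}_\varepsilon u_1 = 0$, that it satisfies the homogeneous Neumann conditions, and that it attains the prescribed initial data in the limit — by exploiting the convergence estimates of Theorems \ref{th1} and \ref{th2}, which license differentiation under the integral sign. First I would observe that the constant-in-$x$ term $\bigl(\frac{1-e^{-at}}{a}\bigr)\tilde F_1$ is annihilated by $\partial_{xx}$, so applying $\mathcal{L}_\varepsilon$ to it leaves only $-\partial_t(\partial_t + a)$ acting on $\frac{1-e^{-at}}{a}\tilde F_1$; a direct computation shows $\partial_t\bigl(\partial_t + a\bigr)\frac{1-e^{-at}}{a} = \partial_t e^{-at} + a\cdot\frac{1-e^{-at}}{a}\cdot 0$ — more precisely one checks that $\bigl(\partial_{tt} + a\partial_t\bigr)\frac{1-e^{-at}}{a} = 0$, so this term is itself a solution of $\mathcal{L}_\varepsilon$. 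For the integral term, I would use the bounds \eqref{217} and \eqref{219} to justify pulling $\partial_{xx}$, $\partial_t$, $\partial_{tt}$ and the mixed operator $\partial_{xxt}$ inside $\int_0^\pi F_1(\xi)(\cdot)\,d\xi$; then \eqref{218}, the fact that $G_\varepsilon$ (equivalently $G$, since they differ by an $x$-independent term killed by $\partial_{xx}$) is a null solution of $\mathcal{L}_\varepsilon$, forces $\mathcal{L}_\varepsilon u_1 = 0$.

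Next I would treat the boundary conditions. Since $G(x,\xi,t) = \frac{2}{\pi}\sum_{n\ge 1} G_n^\varepsilon(t)\cos(n\xi)\cos(nx)$, term-by-term $x$-differentiation gives a series in $\sin(nx)$, which vanishes at $x=0$ and $x=\pi$; the uniform convergence of $G_{\varepsilon,x}$ asserted in Theorem \ref{th2} allows one to conclude $\partial_x u_1(0,t) = \partial_x u_1(\pi,t) = 0$ after differentiating under the integral. (The constant term contributes nothing to $\partial_x u_1$.) This step is essentially immediate once the interchange of $\partial_x$ and $\int$ is justified.

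The main obstacle is the third claim: the uniform-in-$x$ limits as $t\to 0$. The difficulty is that $G(x,\xi,t)$ does not converge to $0$ uniformly as $t\to 0$ — rather the series $\frac{2}{\pi}\sum_n G_n^\varepsilon(t)\cos(n\xi)\cos(nx)$ behaves, after the $\xi$-integration against $F_1$, like the solution of a heat-type / telegraph-type evolution, and as $t\to 0$ the combination $\frac{1-e^{-at}}{a}\tilde F_1 + \int_0^\pi F_1(\xi)G\,d\xi$ should collapse to $0$ while its $t$-derivative reconstructs $F_1(x)$. My plan here is to invoke the analogue of Lemma 2.1 of \cite{mda} directly: the statement is quoted as valid for the Neumann Green function, and the limits \eqref{222} are exactly the initial-data reproduction property of $G_\varepsilon$ established there. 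Concretely, one writes $\partial_t u_1(x,t) = e^{-at}\tilde F_1 + \int_0^\pi F_1(\xi)\,\partial_t G(x,\xi,t)\,d\xi$; using $G_n^\varepsilon(t) = \frac{1}{\omega_n}e^{-h_n t}\sinh(\omega_n t)$ one has $G_n^\varepsilon(0)=0$ and $\dot G_n^\varepsilon(0)=1$, so formally $\partial_t G(x,\xi,0^+) = \frac{2}{\pi}\sum_n \cos(n\xi)\cos(nx)$, the Neumann-cosine reproducing kernel on $(0,\pi)$, whence $\partial_t u_1(x,0^+) = \tilde F_1 + \bigl(F_1(x) - \tilde F_1\bigr) = F_1(x)$; the hypotheses $F_1\in C^1[0,\pi]$ with $\dot F_1(0)=\dot F_1(\pi)=0$ are precisely what make the cosine expansion of $F_1$ converge uniformly, upgrading the pointwise limit to a uniform one. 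The rigorous version of this heuristic — controlling the tail of the series uniformly in $x$ near $t=0$ — is the part I would either carry out by an Abel-summation / integration-by-parts argument on the Fourier coefficients of $F_1$, or simply cite from Lemma 2.1 of \cite{mda}, which the paper has already declared applicable to the Neumann setting.
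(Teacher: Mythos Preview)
Your proposal is correct and matches the paper's approach: Theorems~\ref{th1}--\ref{th2} justify differentiating under the integral so that $\mathcal{L}_\varepsilon u_1=0$ follows from \eqref{218}, and the sine-series form of $\partial_x G$ gives the homogeneous Neumann conditions. For the initial limit $\partial_t u_1\to F_1$ the paper takes precisely the integration-by-parts option you mention---it writes $\partial_t\!\int_0^\pi F_1\,G\,d\xi=-\tfrac{2}{\pi}\int_0^\pi\sum_n\dot G_n^\varepsilon\,\dot F_1(\xi)\tfrac{\sin(n\xi)}{n}\cos(nx)\,d\xi$ and then uses the closed form $-\tfrac{2}{\pi}\sum_n\tfrac{\sin(n\xi)\cos(nx)}{n}=\tfrac{\xi}{\pi}-\eta(\xi-x)$ (Heaviside function) to recover $F_1(x)$; note that Lemma~2.1 of \cite{mda} (Theorem~\ref{th1} here) gives only the decay bounds, not the reproduction property itself, so your IBP route is the one actually needed.
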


\begin{proof}
Theorems \ref{th1} and  \ref{th2}  and continuity of $ F_1 $  assure
that function
$(\ref{221})$ and its partial derivatives  converge absolutely for all
$(x,t)\in \Omega$ and   ${\cal L}_\varepsilon \,u_{1} =0$.
Besides,  $(\ref{222})_1$ holds, too.

Furthermore, since:

\begin{equation}  \label{224}
\partial_t  \int_{0}^{\pi} F_1(\xi) \ \ G(x,\xi,t)\ \ d\xi =- \, \frac{2}{\pi} \int_{0}^{\pi}\sum_{n=1}^{\infty}
\ \dot G^\varepsilon_n \,\,\,\dot F_1(\xi) \, \frac{\sin (n\xi)}{n} \,\cos
(n x) \ d\xi\,,
\end{equation}

\noindent denoting by $\eta(x)$the Heaviside function, it results:

\begin{equation}                            \label{226}
\lim_{t \rightarrow 0}  \partial_tu_{1}=
\tilde F_1+ \int_{0}^{\pi} \,\,
\bigl[\frac{\xi}{\pi}- \eta(\xi-x)\bigr]\,\,\dot F_1(\xi)
d\xi  =\,
F_1(x).
\end{equation}

\noindent Moreover, by means of the uniform convergence of $ G_x  $, also   boundary conditions $(\ref{15})_3$ hold as well. 

\end{proof}

\begin{theorem} \label{the:dati0}

If $F_0(x) \in C^2[0,\pi]$  with
$ \dot F_0(0)=\,\dot F_0(\pi)=0
\,$, then the function 

\begin{equation}
u^*(x,t)\, = \,\,\tilde F_0 \,\,+ \,
(\partial_t+a-\varepsilon\partial_{xx}) \int_{0}^{\pi} F_0(\xi) \ \ G(x,\xi,t)\ \ d\xi 
\end{equation}

\noindent  is a solution   of the
equation ${\cal L}_\varepsilon u^* =0$, it  satisfies  boundary conditions   and it results:

\begin{equation} \label{228}
\lim_{t \rightarrow 0}u^*(x,t)=F_0(x), \ \ \ \ \ \lim_{t \rightarrow 0} \partial_t
u^*(x,t)=0,
\end{equation}

\noindent
{\em uniformly for all }$x\in [0,l]$.

\end{theorem}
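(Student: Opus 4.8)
The plan is to base everything on the single identity ${\cal L}_\varepsilon v=0$, where $v(x,t):=\int_{0}^{\pi}F_0(\xi)\,G(x,\xi,t)\,d\xi$; note that $u^*=\tilde F_0+(\partial_t+a-\varepsilon\partial_{xx})v$. Two preliminary facts will be used throughout. First, integrating by parts twice in $\xi$ and invoking $\dot F_0(0)=\dot F_0(\pi)=0$ yields both $v_{xx}(x,t)=\int_{0}^{\pi}\ddot F_0(\xi)\,G(x,\xi,t)\,d\xi$ and the decay $\hat F_{0,n}:=\int_{0}^{\pi}F_0(\xi)\cos(n\xi)\,d\xi=O(1/n^2)$. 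Secondly, ${\cal L}_\varepsilon v=0$: by the absolute convergence of Theorems \ref{th1}--\ref{th2} one may carry ${\cal L}_\varepsilon$ under the integral sign, and ${\cal L}_\varepsilon G=0$ by (\ref{218}) (indeed ${\cal L}_\varepsilon$ also annihilates the purely time-dependent term $\frac{1}{\pi}\frac{1-e^{-at}}{a}$) — equivalently, $v=u_1-\frac{1-e^{-at}}{a}\,\tilde F_0$ with $u_1$ the solution from Theorem \ref{th:33} for $F_1=F_0$. Since ${\cal L}_\varepsilon$ has constant coefficients it commutes with $\partial_t+a-\varepsilon\partial_{xx}$, and ${\cal L}_\varepsilon\tilde F_0=0$; hence ${\cal L}_\varepsilon u^*=(\partial_t+a-\varepsilon\partial_{xx}){\cal L}_\varepsilon v=0$. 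That this is a genuine classical identity on $\Omega$ follows by rewriting: from ${\cal L}_\varepsilon v=0$ one gets $\partial_t u^*=v_{tt}+av_t-\varepsilon v_{xxt}=(\varepsilon v_{xxt}+v_{xx})-\varepsilon v_{xxt}=v_{xx}$, so $\varepsilon u^*_t+u^*=\tilde F_0+(\partial_t+a)v$, and using $v_{xx}=\int_{0}^{\pi}\ddot F_0\,G\,d\xi$ one checks that both $\partial_{xx}(\varepsilon u^*_t+u^*)$ and $\partial_t(u^*_t+au^*)$ equal $\int_{0}^{\pi}\ddot F_0(\xi)\,(G_t+aG)(x,\xi,t)\,d\xi$, a continuous function of $(x,t)$ by Theorems \ref{th1}--\ref{th2}.

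For the boundary conditions, note that $v$, $v_t$ and $v_{xx}$ are pure cosine series in $x$, with coefficients $\frac{2}{\pi}G^\varepsilon_n(t)\hat F_{0,n}$, $\frac{2}{\pi}\dot G^\varepsilon_n(t)\hat F_{0,n}$ and $-\frac{2}{\pi}n^2G^\varepsilon_n(t)\hat F_{0,n}$ respectively; by the uniform convergence of $G_{\varepsilon,x}$ and the estimates (\ref{219}), the term-by-term $x$-derivatives of these series converge uniformly on $[0,\pi]$ to sine series, hence represent $v_x$, $v_{tx}$, $v_{xxx}$, which are therefore continuous on $[0,\pi]$ and vanish at $x=0$ and $x=\pi$. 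Consequently $u^*_x=v_{tx}+av_x-\varepsilon v_{xxx}$ also vanishes at $x=0,\pi$, which is $(\ref{15})_3$.

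The \emph{core} of the proof is the initial data. Since $0<\omega_n<h_n$ for $n$ large, one has $0\le e^{-h_n t}\sinh(\omega_n t)\le\frac{1}{2}$, whence $|G^\varepsilon_n(t)|\le C/n^2$ and $|\dot G^\varepsilon_n(t)|\le C$ uniformly for $t\in[0,T]$; together with $\hat F_{0,n}=O(1/n^2)$ this shows that the cosine series for $v$, $v_t$ and $v_{xx}$ converge uniformly on $[0,\pi]\times[0,T]$, so they extend continuously up to $t=0$. Passing to the limit $t\to0$ term by term and using $G^\varepsilon_n(0)=0$, $\dot G^\varepsilon_n(0)=1$, we obtain $v(x,0)=0$, $v_{xx}(x,0)=0$ and $v_t(x,0)=\frac{2}{\pi}\sum_{n\ge1}\hat F_{0,n}\cos(nx)=F_0(x)-\tilde F_0$, the last equality being the Fourier cosine expansion of $F_0$ (uniformly convergent, as $F_0\in C^2$); this limit can also be read off as in (\ref{224})--(\ref{226}), by one integration by parts in $\xi$ and the Heaviside-function identity used there. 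Therefore $u^*(x,0)=\tilde F_0+\bigl(F_0(x)-\tilde F_0\bigr)+a\cdot0-\varepsilon\cdot0=F_0(x)$, which is $(\ref{228})_1$; and, since $\partial_t u^*=v_{xx}$, we get $\partial_t u^*(x,0)=v_{xx}(x,0)=0$, which is $(\ref{228})_2$. All limits are uniform in $x\in[0,\pi]$.

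The step I expect to be the main obstacle is this last one, the interchange of the limit $t\to0$ with the infinite sum: the decay estimates of Theorems \ref{th1}--\ref{th2} are of the form $e^{-\beta t}$ and give no uniform control near $t=0$, so one genuinely needs the separate uniform-in-$t$ bounds $|G^\varepsilon_n(t)|\le C/n^2$ and $|\dot G^\varepsilon_n(t)|\le C$ on $[0,T]$, together with the decay $\hat F_{0,n}=O(1/n^2)$ — which is exactly where the hypotheses $F_0\in C^2[0,\pi]$ and $\dot F_0(0)=\dot F_0(\pi)=0$ enter — before the term-by-term limits, and in particular the identification $v_t(x,0)=F_0(x)-\tilde F_0$, are justified.
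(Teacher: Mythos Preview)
Your proof is correct and follows essentially the same route as the paper: both set $v=u_{F_0}:=\int_0^\pi F_0\,G\,d\xi$, establish $\partial_{xx}v=u_{\ddot F_0}$ by integration by parts using $\dot F_0(0)=\dot F_0(\pi)=0$, deduce ${\cal L}_\varepsilon u^*=0$ from ${\cal L}_\varepsilon v=0$ together with the key identity $\partial_t u^*=v_{xx}$, and then read off the initial data from $v(x,0)=v_{xx}(x,0)=0$ and $v_t(x,0)=F_0(x)-\tilde F_0$. The paper is terser and justifies this last limit by pointing to the Heaviside computation (\ref{224})--(\ref{226}) from Theorem~\ref{th:33} (which you also mention); your alternative via the uniform bounds $|\dot G^\varepsilon_n(t)|\le C$ and $\hat F_{0,n}=O(1/n^2)$ is a legitimate, more self-contained justification of the same interchange.
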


\begin{proof}  Let us define: 
\begin{equation}
u_{F_0}:= \int_{0}^{\pi}  F_0(\xi) \ \ G(x,\xi,t)\ \ d\xi  ; \qquad u_{\ddot F_0}:=\int_{0}^{\pi} \ddot F_0(\xi) \ \ G(x,\xi,t)\ \ d\xi  
\end{equation}
 Hypotheses  on $F_0(x)$ and theorem \ref{th1}  
 assure that $  \partial_{xx}u_{ F_0}= u_{\ddot F_0} $
  and so,  since (\ref{218}), equation  ${\cal
L}_\varepsilon\, u^* =0$ is verified as well. Moreover, being 
\begin{equation}
 ( \partial_{tt} \,+ a \partial_{t}) u_{ F_0} = ( \varepsilon \partial_{xxt} + \partial_{xx})u_{ F_0} \,
 \end{equation} 

\noindent it results $ \,\, \partial_t u^* =u_{\ddot F_0}\,\,$ and so 
 $(\ref{228})_2 $ holds.
Finally,  being

\begin{equation}   \label{231}
\lim_{t \rightarrow 0}u^*= \tilde F_0 \, +\,\lim_{t \rightarrow 0} \partial _t u_{ F_0} \,
\end{equation}

\noindent similarly to  (\ref{224})-(\ref{226}), ($\ref{228})_1$ follows, too. 
\end{proof}

Now, let  us consider the convolution of the Green function with the source term F(x,t). 
\begin{theorem}   \label{t42} 
Let the function $F(x,t)$ be  a continuous function
in $\Omega$ with continuous derivative with respect to x,  
 then the function 
 
 \begin{equation} \label{an217}
 u_F\,=- \int_0^t \biggl (\frac{1- e^{-a (t-\tau) }}{a} \biggr)\tilde F(\tau)  d\tau -\int_0^t d \tau \int_{0}^{\pi} F(\xi,\tau) G(x,\xi,t-\tau)d\xi,
 \end{equation}
 
\noindent satisfies equation $ {\cal L}_\varepsilon \,u_F  = F\, $ and   homogeneous boundary conditions are verified. Moreover, one has:

\noindent  
\begin{equation}    \label{an219}
 \lim_{t \rightarrow 0} u_F(x,t)=0, \ \ \ \ \ \lim_{t \rightarrow 0} \partial_t
u_F(x,t)=0,
\end{equation}

\noindent  uniformly for all $x\in [0, \pi]$.
\end{theorem}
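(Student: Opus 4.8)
The plan is to read $(\ref{an217})$ as a Duhamel convolution of the source with the Green function $G_\varepsilon$ and to reduce every claim to the bounds of Theorems \ref{th1}--\ref{th2} and to the identity $(\ref{218})$. Since $\tilde F(\tau)=\frac1\pi\int_0^\pi F(\xi,\tau)\,d\xi$ and, by $(\ref{da})$, $G_\varepsilon=\frac1\pi\frac{1-e^{-at}}{a}+G$, the two summands of $(\ref{an217})$ merge into
\[
u_F(x,t)=-\int_0^t\!\!\int_0^\pi F(\xi,\tau)\,G_\varepsilon(x,\xi,t-\tau)\,d\xi\,d\tau ,
\]
so it suffices to work with $G_\varepsilon$ throughout. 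The value $u_F(x,0)=0$ is immediate, and since $\frac1\pi\frac{1-e^{-at}}{a}\le t/\pi$ and $|G|\le M$ on $\Omega$ by $(\ref{217})$, one gets $|u_F(x,t)|\le C\,t\,\max_\Omega|F|$, which gives $(\ref{an219})_1$ uniformly in $x$.

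Differentiation in $t$ is the heart of the matter. Because $G_\varepsilon(x,\xi,0)=0$ --- indeed $G^\varepsilon_n(0)=0$ in $(\ref{18})$ and the first summand of $(\ref{19})$ vanishes at $t=0$ --- the term produced at $\tau=t$ when differentiating the convolution disappears at first order, so $\partial_t u_F=-\int_0^t\!\int_0^\pi F\,\partial_t G_\varepsilon\,d\xi\,d\tau$; the bound $|\partial_t G_\varepsilon|\le D_1 e^{-\beta t}$ of $(\ref{217})$ then yields $(\ref{an219})_2$ uniformly in $x$. At second order the boundary term survives and equals $-\int_0^\pi F(\xi,t)\,\partial_t G_\varepsilon(x,\xi,0)\,d\xi$; since
\[
\partial_t G_\varepsilon(x,\xi,0)=\frac1\pi+\frac2\pi\sum_{n=1}^{\infty}\cos n\xi\,\cos nx
\]
is precisely the reproducing kernel of the Fourier cosine (Neumann) expansion, the hypothesis that $F(\cdot,t)$ is continuous with continuous $x$-derivative makes its cosine series converge pointwise, so that term equals $-F(x,t)$; equivalently one may integrate by parts in $\xi$ and pass to the limit exactly as in $(\ref{224})$--$(\ref{226})$. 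Interchanging $\partial_t^{\,2}$ with the $d\xi\,d\tau$ integration is legitimate by the exponential bounds $|\partial_t^{\,j}G_\varepsilon|\le D_j e^{-\beta t}$ of Theorem \ref{th1}.

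For the spatial part I would invoke Theorem \ref{th2}: the uniform convergence of $G_{\varepsilon,x}$ and the estimates $(\ref{219})$ for $\partial_x^{(i)}(\varepsilon G_{\varepsilon,t}+G_\varepsilon)$ permit moving $\partial_{xx}(\varepsilon\partial_t+1)$ under the double integral. Collecting the pieces,
\[
{\cal L}_\varepsilon u_F=F(x,t)-\int_0^t\!\!\int_0^\pi F(\xi,\tau)\,{\cal L}_\varepsilon G_\varepsilon(x,\xi,t-\tau)\,d\xi\,d\tau=F(x,t)
\]
because the integrand vanishes identically by $(\ref{218})$. Finally, differentiating $(\ref{an217})$ in $x$ and noting that $\partial_x G_\varepsilon$ is proportional to $\sum_{n\ge1} nG^\varepsilon_n\cos n\xi\,\sin nx$ (the $\tau$-only summand has zero $x$-derivative), which vanishes at $x=0$ and $x=\pi$, the uniform convergence of $G_{\varepsilon,x}$ gives $u_{F,x}(0,t)=u_{F,x}(\pi,t)=0$.

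The step I expect to be the real obstacle is the second $t$-differentiation, i.e. showing that the surviving Duhamel boundary term reproduces exactly $F(x,t)$: this is where the regularity assumption on $F$ is genuinely used (pointwise convergence of the cosine series, or the integration-by-parts device behind $(\ref{224})$--$(\ref{226})$), and where one must be careful about the order of the $t$-limit, the $t$-differentiation and the $\xi$-integration. The remaining interchanges of differentiation and integration are routine once the decay estimates of Theorems \ref{th1}--\ref{th2} are in hand.
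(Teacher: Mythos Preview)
Your argument is correct and follows essentially the same route as the paper: Duhamel convolution with the Green function, the observation that $G_\varepsilon(\cdot,\cdot,0)=0$ kills the first boundary term, the identification of the surviving boundary term at second order with $-F(x,t)$ via the cosine reproducing kernel (the paper phrases this as ``in the same way as $(\ref{224})$--$(\ref{226})$''), then $(\ref{218})$ to annihilate the remaining integral, and uniform convergence of $G_{\varepsilon,x}$ for the boundary conditions. The only cosmetic difference is that you merge the two summands of $(\ref{an217})$ into a single convolution with $G_\varepsilon$ from the outset, whereas the paper keeps the $\tilde F$ term separate and carries the explicit $a\int_0^t \tilde F\,e^{-a(t-\tau)}d\tau$ through the computation of $\partial_{tt}u_F$; your packaging is slightly cleaner but the content is the same.
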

\begin{proof}
In the same way as  $(\ref{224}) $ and  $(\ref{226}),$ it results:

\[ \lim_{\tau \rightarrow t} \,\,\bigg( 
\tilde F(t) + \int_{0}^{\pi} F(\xi,\tau) \ \ G_t(x,\xi,t-\tau)\ \ d\xi\ \bigg)\,= F , \]

\noindent so that, one has:

\begin{equation} \label{an223}
   \partial_{tt}\,u_F(x,t) =  - F(x,t) + a \int_{0}^{t} \tilde F  e^{-a (t-\tau)}d\tau\,
 -\int_{0}^{\pi} F(\xi,\tau)  G_{tt}(x,\xi,t-\tau) d\xi 
\end{equation}

\noindent and, by means of properties   (\ref{218}), ${\cal L}_\varepsilon \,u_F  = F\,$ is verified.

\noindent
Furthermore,
owing to estimates (\ref{217}), since (\ref{an217}),  initial  homogeneous conditions follow.

Moreover, by means of the uniform convergence of $ G_x  $,   boundary conditions $(\ref{15})_3$ hold, as well.   
\end{proof}

Since uniqueness can be a consequence of the energy method (see, f.i.\cite{mda} and reference therein), the following theorem holds:

 \begin{theorem}
When data $ (F_1,F_0,F) $ satisfy  respectively the hypotheses of theorems \ref{th:33},                      \ref{the:dati0},  \ref{t42}, then

\begin{equation}  \label{an225}
 u(x,t)=\,\,\tilde F_0 \,+\,\, \biggl(\frac{1- e^{-a \, t}}{a} \biggr)\tilde F_1 \,\,+ 
\int_{0}^{\pi} \,\,F_1(\xi)\,\, G(x,\xi,t) d\xi\,\,+ \,\,
\\ \\
\end{equation}  
\[ 
\,\, \, (\partial_t+a -\varepsilon\partial_{xx})\int_{0}^{\pi} F_0(\xi)
\,\,G(x,\xi,t) d\xi -\, \int_0^t \,\,\bigg(\frac{1- e^{-a \, (t-\tau) }}{a} \biggr)\tilde F(\tau) \, d\tau\ \]
\[  - \int_0^ t d\tau\, \int_0^\pi\, G(x,\xi,t-\tau)\,
F(\xi,\tau)d\xi. \]

\noindent
 represents the unique solution of problem \ref{15}.

 \end{theorem}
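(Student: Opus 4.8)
The plan is to use the linearity of $ {\cal L}_\varepsilon $ and to obtain $ u $ by superposition of the three particular solutions already constructed. First I would remark that the right-hand side of (\ref{an225}) is nothing but $ u^* + u_1 + u_F $, where $ u_1 $ is the function (\ref{221}) of Theorem \ref{th:33}, $ u^* $ is the function of Theorem \ref{the:dati0}, and $ u_F $ is the function (\ref{an217}) of Theorem \ref{t42}. Consequently, since $ {\cal L}_\varepsilon $ is linear, $ {\cal L}_\varepsilon u = {\cal L}_\varepsilon u^* + {\cal L}_\varepsilon u_1 + {\cal L}_\varepsilon u_F = 0 + 0 + F = F $ in $ \Omega $, so the equation in (\ref{15}) is satisfied.

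Next I would verify the data by adding the three limit relations: from (\ref{228}), (\ref{222}) and (\ref{an219}) one gets, as $ t\to 0 $, $ u\to F_0+0+0=F_0 $ and $ u_t\to 0+F_1+0=F_1 $, both uniformly on $ [0,\pi] $, the uniformity being inherited from each single statement. Similarly, each of $ u_1,\,u^*,\,u_F $ satisfies the homogeneous Neumann conditions $ u_x(0,t)=u_x(\pi,t)=0 $ — a fact guaranteed in the three theorems by the uniform convergence of $ G_x $ — hence so does their sum. Therefore (\ref{an225}) is a solution of problem (\ref{15}).

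For uniqueness I would proceed in the classical way: if $ u $ and $ v $ were two solutions, the difference $ w=u-v $ would solve the homogeneous problem (zero source, zero initial data, homogeneous Neumann conditions); multiplying $ \partial_{xx}(\varepsilon w_t+w)-\partial_t(w_t+aw)=0 $ by a suitable combination of $ w $ and $ w_t $, integrating over $ [0,\pi] $, integrating by parts with the boundary terms vanishing, and applying a Gronwall-type argument, one deduces $ w\equiv 0 $. This is precisely the energy method referred to in \cite{mda}.

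The main obstacle is not a single estimate but the bookkeeping needed to make the three building blocks simultaneously legitimate: Theorems \ref{th:33}, \ref{the:dati0} and \ref{t42} require different regularity of the corresponding datum ($ F_1\in C^1 $, $ F_0\in C^2 $, $ F $ continuous with continuous $ x $-derivative) together with the compatibility conditions $ \dot F_1(0)=\dot F_1(\pi)=0 $ and $ \dot F_0(0)=\dot F_0(\pi)=0 $, and one must check that under these joint hypotheses all the term-by-term manipulations implicit in (\ref{an225}) — in particular the action of $ \partial_t+a-\varepsilon\partial_{xx} $ on the $ F_0 $-integral — are justified by the absolute and uniform convergence supplied by Theorems \ref{th1} and \ref{th2}. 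Once this is arranged, the statement follows at once.
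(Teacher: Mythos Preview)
Your argument is correct and coincides with the paper's own approach: the paper simply observes that the formula (\ref{an225}) is the sum $u^*+u_1+u_F$ of the solutions built in Theorems \ref{th:33}, \ref{the:dati0}, \ref{t42}, so that existence follows by linearity, while uniqueness is referred to the energy method of \cite{mda}. There is nothing to add.
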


\section{ Estimates related to the Green Functions }

Let us consider the  following difference:

\begin{equation}                \label {44}
 \sum ^\infty _{n=1} \bigg[ e^{-h_n \,\,t} \, \frac{\sinh (\omega_n \,t)}{ n^2 \,\omega_n} \,-\, e^{-h_1 t} \,  \frac{\sin (\omega_0 \,t)}{ n^2 \,\omega_0}\bigg] g_n  =  \sum ^\infty _{n=1} H_n g_n \,
\end{equation}

\noindent where   
\begin{equation}
\,\, g_n = \frac{2}{\pi•}\cos(n\xi) \cos(nx);\,\quad h_1= (a+\varepsilon )/2 .\
\end{equation}

Assuming $ 0< a <1  $ and $0< \varepsilon <1, $ let us denote by $ N_1 $ the minimum  natural number larger than $\frac{1}{ \varepsilon}\,\,(\, 1 - \sqrt{ 1-  a \, \varepsilon }\,), $ and by $ N_2  $ the maximum natural number smaller than  $\frac{1}{ \varepsilon}\,\,(\, 1 + \sqrt{ 1-  a \, \varepsilon }\,).\,\, $  

Since   $ N_1 <1, $   $ G^\varepsilon_n(t) $  in $ (\ref{18})_3 $ contains trigonometric functions for  $1 \leq n\, \leq\, N_2 $  and hyperbolic terms for $ n\geq N_{2}+1. $  Hence  intervals $ (1,N_2); (N_{2}+1,\infty)$ must be considered.

So, denoting 

\begin{equation}  \label{41}
\sum ^\infty _{n=1} H_n g_n=  \sum ^{N_2} _{n=1}  \,  H_n ( t,\varepsilon ) \, g_n + \sum^\infty _{n=N_2+1}  \,  H_n ( t,\varepsilon ) \, g_n = H_1 +H_2, 
\end{equation}

\noindent the following theorem holds:
\begin{theorem} \label{th7}
 
Whatever    $ 1/2<\gamma<1,$ and $ 0<\delta<2$ may be,  there exists a positive constant $ A $, independent from $ \varepsilon, $ such that the following estimate holds:

\begin{equation}  \label{38}
\bigg|\sum_{n=1}^{\infty} [\,G_n^\varepsilon(t)-  e^{\frac{-\varepsilon  t}{2}} 
 G_n^o(t)\,]\frac{g_n}{n^2}\,\bigg| \leq A [  \varepsilon^{1-\gamma•} \, r(t) \, e^{-\,\frac{a}{2}\, t} + \varepsilon  e^{-\,\frac{1}{4}\, \theta}]  
 \end{equation}

\noindent  where  $ \theta   $ denotes the fast time $ t/\varepsilon $  and $r(t) =1+t+t^{1-\gamma}+t^{2-\delta}\,.$

\end{theorem}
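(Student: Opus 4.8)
\noindent\emph{Plan of proof.} The plan is to use the decomposition $(\ref{41})$ and to bound the ``hyperbolic'' block $H_2$ (indices $n\ge N_2+1$, where $h_n>n$) and the ``trigonometric'' block $H_1$ (indices $1\le n\le N_2$, where $h_n<n$) separately; the only genuinely delicate part is $H_1$, since it ranges over $\sim N_2\approx2/\varepsilon$ indices. Throughout, all constants may depend on $a$ (and, at the end, on $\gamma$ and $\delta$) but never on $\varepsilon$; as the statement is trivial when $\varepsilon$ stays away from $0$, one may assume $\varepsilon$ small, so that $\tfrac1{2\varepsilon}\le N_2<\tfrac2\varepsilon$.

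\noindent\emph{Hyperbolic block.} For $n\ge N_2+1$ one has $n>1/\varepsilon$, hence $\varepsilon n^2>n>a$ and $h_n-\omega_n=\frac{n^2}{h_n+\omega_n}\ge\frac{n^2}{a+\varepsilon n^2}\ge\frac1{2\varepsilon}$. Writing $G^\varepsilon_n(t)=\frac{e^{-(h_n-\omega_n)t}-e^{-(h_n+\omega_n)t}}{2\omega_n}$ and using the elementary inequality $\frac{e^{-pt}-e^{-qt}}{q-p}\le te^{-pt}$ for $0<p<q$, one gets $|G^\varepsilon_n(t)|\le t\,e^{-(h_n-\omega_n)t}\le t\,e^{-\theta/2}\le 2\varepsilon\,e^{-\theta/4}$, whereas $|e^{-\varepsilon t/2}G^o_n(t)|\le e^{-at/2}/\omega_0\le\frac{2}{\sqrt3\,n}\,e^{-at/2}$ since $\omega_0=\sqrt{n^2-a^2/4}\ge\frac{\sqrt3}2n$. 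Multiplying by $|g_n|/n^2\le\frac2{\pi n^2}$ and summing (using $\sum_{n\ge1}n^{-2}<\infty$ for the first term and $\sum_{n>N_2}n^{-3}\le C\varepsilon^2$ for the second), one obtains $|H_2|\le A_2\big(\varepsilon\,e^{-\theta/4}+\varepsilon^2 e^{-at/2}\big)$, which is absorbed by the right-hand side of $(\ref{38})$.

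\noindent\emph{Trigonometric block.} For $1\le n\le N_2$ set $\bar\omega_n:=\sqrt{n^2-h_n^2}>0$, so that $G^\varepsilon_n(t)=e^{-at/2}e^{-\varepsilon n^2t/2}\frac{\sin\bar\omega_n t}{\bar\omega_n}$ and $e^{-\varepsilon t/2}G^o_n(t)=e^{-at/2}e^{-\varepsilon t/2}\frac{\sin\omega_0 t}{\omega_0}$. After factoring out $e^{-at/2}$, I would split the remaining difference into an ``amplitude'' and a ``frequency'' contribution,
\[
e^{-\varepsilon n^2t/2}\tfrac{\sin\bar\omega_n t}{\bar\omega_n}-e^{-\varepsilon t/2}\tfrac{\sin\omega_0 t}{\omega_0}=\big(e^{-\varepsilon n^2t/2}-e^{-\varepsilon t/2}\big)\tfrac{\sin\bar\omega_n t}{\bar\omega_n}+e^{-\varepsilon t/2}\Big(\tfrac{\sin\bar\omega_n t}{\bar\omega_n}-\tfrac{\sin\omega_0 t}{\omega_0}\Big),
\]
and control each by elementary bounds: $|e^{-u}-e^{-v}|\le\min(1,|u-v|)$, so the amplitude bracket is $\le\min\!\big(1,\tfrac{\varepsilon(n^2-1)t}2\big)$; $\big|\tfrac{\sin\omega t}{\omega}\big|\le\min(t,\omega^{-1})$; and, writing $\tfrac{\sin\omega t}{\omega}=\int_0^t\cos(\omega s)\,ds$ so that $\big|\tfrac{d}{d\omega}\tfrac{\sin\omega t}{\omega}\big|\le\tfrac{t^2}2$, the frequency factor is $\le\min\!\big(2t,\,\tfrac{t^2}2(\omega_0-\bar\omega_n)\big)$, and also $\le\tfrac2{\bar\omega_n}$. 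The frequency gap is controlled via $\omega_0^2-\bar\omega_n^2=h_n^2-\tfrac{a^2}4=\tfrac14\varepsilon n^2(2a+\varepsilon n^2)$ together with $\omega_0+\bar\omega_n\ge\omega_0\ge\sqrt{1-a^2/4}$, giving $\omega_0-\bar\omega_n\le C\varepsilon n^2(1+\varepsilon n^2)$ (and, trivially, $\omega_0-\bar\omega_n\le n$); moreover $\bar\omega_n\ge c\,n$ as long as $n$ stays below a fixed fraction of $N_2$, whereas near $N_2$ one has $\bar\omega_n^2\gtrsim\varepsilon^{-1}(N_2-n)$.

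\noindent\emph{Summation, conclusion, and main obstacle.} Inserting these bounds, $|H_1|$ is dominated by $e^{-at/2}\sum_{n=1}^{N_2}n^{-2}\Phi_n$, where each $\Phi_n$ is made of factors of the type $\min(1,\varepsilon n^2t)$, $\min(t,\bar\omega_n^{-1})$, $\min\!\big(2t,t^2\varepsilon n^2(1+\varepsilon n^2)\big)$, $\min(2t,t^2n)$. On each resulting sum I would use the interpolation inequality $\min(X,Y)\le X^{1-\alpha}Y^{\alpha}$ with $\alpha$ fixed, and split the range of $n$ at one or two thresholds of order $\varepsilon^{-1/2}$, $\varepsilon^{-1}$ (where a minimum changes branch): taking $\alpha=1-\gamma$ makes the oscillatory amplitude produce the weight $n^{-2\gamma}$, summable exactly because $\gamma>\tfrac12$, and yields the contributions $\varepsilon^{1-\gamma}$, $\varepsilon^{1-\gamma}t$, $\varepsilon^{1-\gamma}t^{1-\gamma}$; the frequency factor $\tfrac{t^2}2(\omega_0-\bar\omega_n)$, handled with an exponent fixed by $\delta$ (which must obey $\delta<2$, so that $2-\delta>0$ and the associated $\varepsilon$-power stays $\ge1-\gamma$), yields the contribution $\varepsilon^{1-\gamma}t^{2-\delta}$; and whenever a branch leaves a leftover finite sum $\sum_{n\le N_2}n^{p}$ with $p>-1$, it is evaluated by $\le C_p N_2^{p+1}\le C_p'\varepsilon^{-(p+1)}$, the negative $\varepsilon$-power being precisely what the other factors need. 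The indices near $N_2$, where $\bar\omega_n\to0$, are isolated and treated with the $t$-free bounds $\tfrac1{\bar\omega_n}$ (together with $\bar\omega_n^2\gtrsim\varepsilon^{-1}(N_2-n)$, which keeps $\sum1/\bar\omega_n=O(1)$), so that — weighted by $n^{-2}\approx\varepsilon^2$ — their total contribution is $\lesssim\varepsilon^2e^{-at/2}$. Summing $|H_1|$ and $|H_2|$, absorbing every $\varepsilon$-power exceeding $\varepsilon^{1-\gamma}$ into the constant and every polynomial in $t$ bounded by $r(t)=1+t+t^{1-\gamma}+t^{2-\delta}$, yields $(\ref{38})$ with $A=A(a,\gamma,\delta)$. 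The hard part is this last block: a crude termwise estimate summed over its $\sim1/\varepsilon$ indices would cost a full factor $\varepsilon^{-1}$ and ruin the bound, so the whole content lies in the interpolation-and-splitting that turns the finite sums into the right negative powers of $\varepsilon$ while keeping all constants $\varepsilon$-uniform — $\gamma>\tfrac12$ securing the convergence of the tail series $\sum n^{-2\gamma}$, and $\delta<2$ keeping the critical exponent $2-\delta$ positive.
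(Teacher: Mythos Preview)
Your decomposition into $H_1$ (trigonometric, $1\le n\le N_2$) and $H_2$ (hyperbolic, $n\ge N_2+1$), and your treatment of $H_2$, match the paper essentially line by line, including the key observation $h_n-\omega_n\ge\tfrac1{2\varepsilon}$ and the tail estimate $\sum_{n>N_2}n^{-3}=O(\varepsilon^2)$.

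For the trigonometric block $H_1$, however, the paper takes a genuinely different route. Rather than your amplitude/frequency splitting combined with interpolation inequalities $\min(X,Y)\le X^\alpha Y^{1-\alpha}$, the paper computes the Laplace transform $n^2\hat H_n(s,\varepsilon)$ and inverts it to obtain a \emph{convolution representation}: $H_n(t,\varepsilon)$ is written as a sum of two time-convolutions of $e^{-h_1\tau}\sin(\omega_0\tau)$ against $e^{-h_n\tau}\sin(\tilde\omega_n\tau)$ and its derivative, each carrying an explicit prefactor $\varepsilon$. The exponential $e^{-(h_n-h_1)\tau}=e^{-\varepsilon(n^2-1)\tau/2}$ inside the integral, together with the elementary inequality $e^{-x}\le(\alpha/ex)^\alpha$ and the uniform lower bounds $\omega_0\ge n\sqrt{1-a^2/4}$ and $\tilde\omega_n\ge g_2\,n\sqrt{\varepsilon}$ (the latter valid on the whole range $1\le n\le N_2-1$, with the single endpoint $n=N_2$ treated by itself), then yields the bound $H_1\le A_1\big[\sqrt{\varepsilon}\,t+(\varepsilon t)^{1-\gamma}+\varepsilon^{1+\delta}t^{2-\delta}\big]e^{-at/2}$ directly: the parameter $\gamma$ enters through the choice of $\alpha=\gamma$ in $e^{-x}\le(\gamma/ex)^\gamma$, and the $t^{2-\delta}$ term comes solely from the isolated index $n=N_2$.

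Your approach is more elementary in that it avoids Laplace transforms altogether; the price is that the dependence on $\gamma$ and $\delta$ is hidden in the choice of interpolation exponents and thresholds, and the claim that everything reassembles into exactly $\varepsilon^{1-\gamma}r(t)$ is asserted rather than checked (in particular your handling of $\delta$ is loose---note that for $\delta\ge1$ the term $t^{2-\delta}$ is already dominated by $1+t$, so only $\delta<1$ has content). The paper's convolution method makes the origin of each power transparent and, thanks to the uniform bound $\tilde\omega_n\ge g_2 n\sqrt{\varepsilon}$, does not need your separate treatment of indices near $N_2$ where $\bar\omega_n\to0$.
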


\begin{proof}

 Referring to the trigonometric terms related to $H_1$ defined in (\ref{41}), indicating by $ \hat H_n (s,\varepsilon) $   the Laplace transform of function $ H_n(t,\varepsilon ), $  one deduces that  

\[n^2\hat H_n(s, \varepsilon)= \frac{(s+h_1)^2•+\omega_0^2 -(s+h_n)^2•-\tilde \omega_n^2}{[(s+h_1)^2•+\omega_0^2] \,\,\,[(s+h_n)^2•+\tilde\omega_n^2}
 \]
 
\noindent where  $ \tilde \omega_n^2 \,\, = {n^2- h_n^2 }. $  So that, denoting by $ \tilde g_1= \frac{1}{2}(a+\frac{\varepsilon}{2})  ,
$   it results

\begin{eqnarray} \label{399}
 &\nonumber H_n( t,\varepsilon) = \frac{ \varepsilon\tilde g_1   }{n^2 \,\omega_0 \tilde \omega_n }   \int _0^t  e^{-h_1(t-\tau)} \sin (\omega_0 (t-\tau))  e^{-h_n \tau}   \sin (\omega_n \,\tau) d\tau  +
 \\
 \\
 &\nonumber\frac{ \varepsilon   (n^2-1)}{n^2\omega_o•} \int_0^t    e^{-h_1(t-\tau)} \sin (\omega_0 (t-\tau))  \{ e^{-h_n \tau} [\frac{h_n}{\tilde\omega_n•}  \sin (\tilde\omega_n  \tau)-  \cos (\tilde\omega_n \,\tau)]\}\,.
\end{eqnarray}

\noindent\[ 
\nonumber \hspace{3mm}\mbox{ Now,  letting }\qquad \qquad g_0=  \sqrt{1-\frac{a^2}{4}}, \quad \,g_1= \frac{1}{2•}(a+\frac{1}{2•}), \qquad \qquad\]

 \noindent one has
 
\begin{equation}  \label{370}
  \omega_0 \geq n\, g_0, \quad \, \tilde g_1 \leq g_1.  
\end{equation}  
Moreover, since  $ \tilde \omega_n\,\geq  n \sqrt{1-h_n/n•}= n  \,
\Phi_n, \, $  indicating by

\begin{eqnarray}
 \nonumber & s =  \sqrt{\frac{2\sqrt{1-a\varepsilon}-\varepsilon•}{2(1-\varepsilon+\sqrt{1-a\varepsilon})}} = \frac{( \Phi_n )_{n= N_2-1}•}{\sqrt{\varepsilon }•}, \quad  q= \sqrt{\frac{2-a-\varepsilon }{2 }}=( \Phi_n )_{n=1}\\ \nonumber\\
 \nonumber \mbox { and by}&\ell= \min \,\{ \,s, q\,\},\qquad
  \end{eqnarray}

 \noindent   it is possible to choose  a positive constant  $\, g_2 \, $ depending on parameter  $ a, $ but  independent from $ \varepsilon, $  such that $ g_2 \leq  \ell\,.\,  $ In such a way, for all  $  1\leq n\leq N_2-1\,, $ one has:

\begin{equation}  \label{40}
 \tilde \omega_n\, \geq n \,\sqrt{\varepsilon} \,\, g_{2}.      
\end{equation}

 Therefore, taking into account that   $\, h_n-h_1 =\varepsilon  (n^2-1)/2 $ and that  
 
 \begin{equation} \label{a}
 e^{-x} \leq \frac{\alpha^\alpha •}{(ex)^\alpha •} \qquad  \qquad \forall \,\,x\, > 0; \, \,\,\forall \alpha >0\,,
 \end{equation}  
 
  \noindent considering  as well that  $ \varepsilon^2 (N_2^2-1)\,\geq    1-a\,+2\sqrt{1-a} :=\rho \,,$  from (\ref{399}) it results:

\begin{equation}
 e^{\frac{a\,t}{2•} }\sum_{n=1}^{N_2} H_n \leq   \sum_{n=1}^{N_2-1}\bigg ( \frac{g_1}{ n^4 g_0 g_2•} + \frac{a}{2 g_0g_2n^2•}    \bigg) \sqrt{\varepsilon}  t + 
\end{equation}

\[\bigg[ \sum_{n=1}^{N_2}\frac{ 1 }{n g_o•}  + \sum_{n=2}^{N_2-1}\frac{ \varepsilon ^{1/2}  }{ g_0g_2•} \bigg]\frac{(2 \gamma)^\gamma}{(e )^\gamma (1-\gamma)} \,\,\frac{(\varepsilon t )^{1-\gamma}}{(n^2-1)^{\gamma}} +\]

\[ + \bigg[ \frac{ g_1  \varepsilon ^4}{ \,g_0 (1+\sqrt{1-a\varepsilon})^3•}  + \frac{ \varepsilon^2   a+\varepsilon  (\sqrt{1-a\varepsilon})^2}{2g_0(1+\sqrt{1-a\varepsilon})•}\bigg] \frac{2 (\delta/e)^{\delta}•}{ \rho^\delta•(2-\delta)}\, \varepsilon ^\delta t^{2-\delta} ,\]

\noindent where     it is assumed that $ 1/2<\gamma<1, $  and $ 0<\delta<2. $

 So    it is possible to choose a positive constant $A_1\,, $   independent from $ \varepsilon,  $  such that:  
 
\begin{equation}\label{422}
H_1 =\sum_{n=1}^{n=N_2} H_n g_n \leq  A_1 [\sqrt{\varepsilon} t  +(\varepsilon t )^{1-\gamma} +\varepsilon ^{1+\delta} t^{2-\delta}] e^{-\frac{a}{2•}t}.
\end{equation}

  As for $H_2$ of (\ref{41}), one has:

\begin{equation} \label{42}
 H_2  =   \sum ^\infty _{n=N_2+1} e^{-h_n \,\,t}  \frac{\sinh (\omega_n \,t)}{ n^2 \,\omega_n}    g_n - e^{-h_1t} \sum ^\infty _{n=N_2+1} \frac{sen(\omega_0 t)}{n^2\, \omega_0}  g_n  = H'_2-H''_2.
\end{equation}

Being

\begin{equation} \label{46}
e^{-h_n t} \,\,\, \frac{\sinh (\omega_n \,t)}{  \,\omega_n}= e^{-(h_n-\omega_n)t }\int _0^t e^{-2\omega_n\tau}  d\tau \,\,\leq  t \,\,e^{-(h_n-\omega_n)t }
\end{equation}

and
\begin{equation} \label{47}
h_n-\omega_n = h_n - h_n \sqrt{1-\frac{n^2}{h_n^2•}•} \geq  h_n - h_n \bigg(1-\frac{n^2}{2 h_n^2•}\bigg),
\end{equation}

since $ n\geq N_2+1 > 1/ \varepsilon $ and $ a\varepsilon <1, $ one has:

\begin{equation} \label{48}
\,e^{-(h_n-\omega_n)t }\leq  \,\, e^{- t\,\,\frac{n^2}{2 h_n•}•}\leq  e^{- \,\frac{t}{2\varepsilon }}\,.
\end{equation}

 As  a consequence, denoting by  $\,\,\theta $ the fast time $ t/\varepsilon $ and by  $ \zeta(z)   $   the Riemann zeta function, it results:

\begin{equation} \label{44}
\mid H'_2\mid   \leq \,\, t  \,\,  \, e^{-\frac{t}{2\varepsilon}}\,  \sum ^\infty _{n=1} \,\, \frac{1}{n^2}\,\, = \zeta(2) \,\, \varepsilon \,\, \theta\,\, e^{-\,\frac{1}{2}\, \theta}.
\end{equation}
\end{proof}

Finally, since $(\ref{370})_1$ one deduces that:

\begin{equation}                             \label{416}
\, \mid H''_2 \mid \,\, \leq \,\, \frac{e^{-\varepsilon t/2}}{ g_0} \,\, \sum ^\infty _{n=N_2+1} \,\, \frac{ e^{-at/2} }{n^3}\,\, \leq \frac{1}{  g_0} \,\, \frac{e^{-at/2}}{N_2+1} \zeta(2) \leq \,\,A_2\,\, \varepsilon \,\,e^{-at/2}
\end{equation}

where $ A_2 = (1+\sqrt{1-a\varepsilon} +\varepsilon)^{-1}\,\,\zeta(2)  /g_0  \leq   \zeta(2)  /g_0.  $ 

Referring to ${(\ref{44})},$  by means of (\ref{a}), one has: 

\begin{equation}                   \label{417}
\theta \, e^{-\frac{1}{2}\, \theta} \,\, \leq \,\,( 4 /e
)\,\, 
e^{-\frac{1}{4}\,\theta}\Rightarrow  \mid H'_2\mid   \leq ( 4 /e
)\,\, \zeta(2) \varepsilon \,\, e^{-\,\frac{1}{4}\, \theta}\,.
\end{equation}

So, by means of (\ref{422}),(\ref{416}),(\ref{417}), there exists a positive constant $  A  $   such that inequality (\ref{38}) holds.

Now,   according to solution (\ref{an225}), attention must be paid to the derivative with respect to variable $ t. $ So,  the following theorem is proved:

\begin{theorem} \label{th8}
 Whatever  $ 1/2<\gamma<1,  $ $ 0<k<1/2\, $ and $ 0<\eta<1 $ may be,  there exists a positive constant $C, $ independent from   $ \varepsilon,  $ such that the following estimate holds:

\begin{equation}
   \label{499} 
\nonumber\bigg| \partial _t \bigg[ \sum_{n=1}^{\infty} G^\varepsilon _n(t)\,- \, e^{\frac{-\varepsilon  t}{2}}  \sum_{n=1}^{\infty} 
 G_n^o(t)
  \bigg]\frac{ g_n}{n^2} \bigg| \leq C \,\,\varepsilon^m  \{ e^{-\frac{\theta•}{2•}} +
  p(t)  e^{-\frac{a}{2•} t}\} 
\end{equation}
\noindent   where 
\begin{equation}  \label{51}
m = \min (1-\gamma; \,\,  1-\eta;\,\,1/2-k,); \,\quad  p(t)= 2t+t^{1-\gamma}+3\,,
\end{equation}

\noindent and $ \theta   $ denotes the fast time $ t/\varepsilon. $ 
\end{theorem}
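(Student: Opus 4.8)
\emph{Sketch of the argument.} The plan is to run the proof of Theorem~\ref{th7} once more, carrying the operator $\partial_t$ through every step. Write $H_n(t,\varepsilon)=G^\varepsilon_n(t)-e^{-\varepsilon t/2}G^o_n(t)$, so that the left‑hand side of (\ref{499}) is $\partial_t\sum_n H_n g_n/n^2$. By Theorems~\ref{th1}--\ref{th2} this series and its $t$–derivative converge uniformly on $\Omega$, so term–by–term differentiation is legitimate and the quantity equals $\sum_n\dot H_n\,g_n/n^2$. Keeping the splitting of (\ref{41}), I treat separately the \emph{trigonometric block} $\partial_t H_1=\sum_{n=1}^{N_2}\dot H_n g_n/n^2$, which will produce the $p(t)\,e^{-\frac a2 t}$ part of the estimate, and the \emph{hyperbolic block} $\partial_t H_2=\partial_t H'_2-\partial_t H''_2$ over $n\ge N_2+1$, which will produce the $e^{-\theta/2}$ part.

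For the hyperbolic block put $\phi_n(t):=e^{-h_n t}\sinh(\omega_n t)/\omega_n$, so $H'_2=\sum_{n>N_2}\phi_n g_n/n^2$. A direct computation (equivalently, the defining ODE $\phi_n''+2h_n\phi_n'+n^2\phi_n=0$) gives the identity $\phi_n'(t)=-(h_n-\omega_n)\phi_n(t)+e^{-(h_n+\omega_n)t}$. Combining this with $\phi_n(t)\le t\,e^{-(h_n-\omega_n)t}$ from (\ref{46}) and with (\ref{47})--(\ref{48}) — whence $\tfrac1{2\varepsilon}\le h_n-\omega_n\le\tfrac2\varepsilon$ and $h_n+\omega_n\ge\tfrac1{2\varepsilon}$ for $n\ge N_2+1$ — one gets $|\phi_n'(t)|\le(2\theta+1)\,e^{-\theta/2}$ uniformly in $n$. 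Since $\sum_{n>N_2}n^{-2}<N_2^{-1}\le C\varepsilon$ (because $N_2+1>(1+\sqrt{1-a\varepsilon})/\varepsilon$), this yields $|\partial_t H'_2|\le C\,\varepsilon\,(2\theta+1)\,e^{-\theta/2}$, which is of the required form (one may absorb the polynomial factor in $\theta$ via (\ref{a}) if the cleaner $e^{-\theta/2}$ is wanted). For $\partial_t H''_2$ one differentiates $e^{-h_1 t}\sin(\omega_0 t)/\omega_0$ directly, bounds the result by $(1+h_1/(n g_0))\,e^{-h_1 t}$ using $\omega_0\ge n g_0$, and sums $n^{-2}$ over $n>N_2$ to obtain $|\partial_t H''_2|\le C\,\varepsilon\,e^{-a t/2}$.

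For the trigonometric block I start from the convolution representation (\ref{399}) of $H_n$ and differentiate under the integral sign. Each inner kernel has the shape $e^{-h_1 s}\sin(\omega_0 s)$, which vanishes at $s=0$, so the boundary term of $\partial_t\int_0^t(\cdot)\,d\tau$ disappears, leaving convolutions against $\partial_s[e^{-h_1 s}\sin\omega_0 s]=e^{-h_1 s}(\omega_0\cos\omega_0 s-h_1\sin\omega_0 s)$, of modulus at most $(\omega_0+h_1)e^{-h_1 s}\le(n+1)e^{-h_1 s}$ since $\omega_0=\sqrt{n^2-a^2/4}\le n$ and $h_1<1$. Hence $|\dot H_n|$ obeys exactly the bounds used for $|H_n|$ in the derivation of (\ref{422}), multiplied by one extra power of $n$. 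Re‑running that derivation with (\ref{370}), (\ref{40}) and (\ref{a}), the only genuinely new point is that a couple of the sums over $1\le n\le N_2$ that were convergent in Theorem~\ref{th7} now sit at (or just past) the edge of convergence — of the types $\sum_{n\le N_2}n^{-1}\sim|\log\varepsilon|$ and $\sum_{n\le N_2}n^{1-2\gamma}\sim\varepsilon^{-(2-2\gamma)}$ — and each is absorbed by trading an exponent $\varepsilon^{\sigma}$ for a slightly smaller one, either through $\sup_{0<\varepsilon<1}\varepsilon^{k}|\log\varepsilon|<\infty$ or through a further application of (\ref{a}) with a free exponent. This is precisely what forces the three free parameters $1/2<\gamma<1$, $0<\eta<1$, $0<k<1/2$ and the exponent $m=\min(1-\gamma,1-\eta,1/2-k)$ of (\ref{51}), and it gives $|\partial_t H_1|\le C\,\varepsilon^{m}\,(2t+t^{1-\gamma}+3)\,e^{-\frac a2 t}$.

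Adding the three contributions, using $\varepsilon<1$ (so a lower power of $\varepsilon$ dominates a higher one) and absorbing all constants into one $C$, one arrives at (\ref{499}). I expect the main obstacle to be the exponent bookkeeping in the trigonometric block: differentiation multiplies each Fourier mode by a factor of order $n$, which pushes several of the finely tuned series of Theorem~\ref{th7} to the brink of divergence, and one must verify that every resulting loss in $\varepsilon$ (a logarithm, or a small negative power) can still be hidden inside $\varepsilon^{m}$ with $m>0$. The hyperbolic block is, by contrast, robust — provided one uses the ODE identity $\phi_n'=-(h_n-\omega_n)\phi_n+e^{-(h_n+\omega_n)t}$ rather than differentiating $\sinh(\omega_n t)/\omega_n$ naively, the naive route bringing in the factor $h_n/\omega_n$ which is \emph{not} uniformly bounded as $\varepsilon$ varies, since $\omega_{N_2+1}$ can be made arbitrarily small by the floor in the definition of $N_2$.
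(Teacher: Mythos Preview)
Your global plan---split at $n=N_2$, differentiate the convolution formula (\ref{399}) on the trigonometric range and the explicit terms on the hyperbolic range---is exactly the paper's. The execution diverges in the hyperbolic block, and your bookkeeping of where the parameters $k,\eta$ enter does not match the paper.

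For $\partial_tH'_2$ the paper does \emph{not} use your identity $\phi_n'=-(h_n-\omega_n)\phi_n+e^{-(h_n+\omega_n)t}$. It differentiates $e^{-h_nt}\sinh(\omega_nt)/\omega_n$ directly, accepts the factor $h_n/\omega_n$, bounds it by $\varphi/\sqrt\varepsilon$ with $\varphi\le 9/(2\sqrt2)$ (using that $h_n/\omega_n$ is decreasing for $n\ge N_2+1$), and then splits $n^{-2}=n^{-(1+k)}n^{-(1-k)}$ together with $n>N_2\gtrsim\varepsilon^{-1}$ to land on $C\,\varepsilon^{1/2-k}e^{-\theta/2}$. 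Your route is more robust against the smallness of $\omega_{N_2+1}$, but the bound you obtain, $C\,\varepsilon(2\theta+1)e^{-\theta/2}=C(2t+\varepsilon)e^{-\theta/2}$, is not literally of the form $C\,\varepsilon^m e^{-\theta/2}$: absorbing $(2\theta+1)$ via (\ref{a}) costs half the exponent and yields only $e^{-\theta/4}$. One can still push $2t\,e^{-\theta/2}$ under $C\,\varepsilon^m p(t)e^{-at/2}$ by writing $2t\,e^{-\theta/2}=2t\,e^{-at/2}\cdot e^{-(1-a\varepsilon)t/(2\varepsilon)}$ and splitting according to whether $t$ is larger or smaller than $C\varepsilon\log(1/\varepsilon)$, but this extra step is missing from your sketch.

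On the parameters: in the paper $\eta$ does \emph{not} arise in the trigonometric block at all; it enters only through $\partial_tH''_2$ via the splitting $n^{-2}=n^{-(1+\eta)}n^{-(1-\eta)}$ in (\ref{557}). The parameter $k$ enters the trigonometric block through the harmonic sum $\sum_{n\le N_2}n^{-1}\le c_0+\tfrac1{2N_2}+\ln N_2$ and $\ln N_2\le (ek)^{-1}N_2^{k}\sim C\varepsilon^{-k}$ (display (\ref{53})), which controls the new term (\ref{52}) coming from the $\tfrac12\varepsilon n^2$ piece of $h_n$ after differentiation; $k$ then reappears in the hyperbolic estimate on $\partial_tH'_2$. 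So your assertion that the trigonometric block alone ``forces the three free parameters'' is not how the argument is organised here, and your own hyperbolic bounds ($C\varepsilon$ rather than $C\varepsilon^{1/2-k}$, $C\varepsilon^{1-\eta}$) would leave $\eta$ without a role.
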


\begin{proof}
As for  terms of $ H_1 $ of (\ref{41}), when $ n= N_2, $ since $ \int _0^t \,\tau\, e^{- \varepsilon (N_2^2-1)  \tau/2}  \,  d\tau \leq \varepsilon /(1-a)\,, $ it is possible to find a  positive constant $ C_1,  $  independent from $ \varepsilon, $  such that, from (\ref{399}) one obtains:

\begin{equation} \label{bb}
e^{\frac{a•}{2•}t}|\partial_t H_1| \leq \varepsilon  \sum_{n=1}^{N_2-1•}  \frac{\omega_0 +h_1}{\omega_0\tilde\omega_n} \bigg(\frac{ \tilde g_1   }{n^2 } +   h_n+\tilde\omega_n\bigg)  \int _0^t  e^{- \varepsilon (n^2-1)  \tau/2}   d\tau  + C_1 \varepsilon (1+t)
\end{equation}

\noindent where, as defined in $(\ref{18})_1$,  $\, h_n= (a+\varepsilon n^2)/2. \,$ 

 Besides, letting  by  $ c_0 $ the  Euler constant, one has  $ \sum_{n=1}^{n=N_2} \,\,\frac{1}{n}\leq  c_0 \, +\ \frac{1}{2N_2}+ ln\,  N_2, $ with $\,\,\ ln \, N_2 \, \leq \,  (\beta /e )\,\, N_2 ^{1/\beta}\,\, \forall \beta >0.   $ So that, assuming $ \beta = k^{-1}\, $ $ (k>0) $ it results:

\begin{equation} \label{53}
\sum_{n=1}^{n=N_2} \,\,\frac{1}{n}\leq      \,\,    \bigg[ c_0 \, +\ \frac{\varepsilon   }{2(1+ \sqrt{1-a\varepsilon)}}+ \frac{  (1+ \sqrt{1-a\varepsilon}) ^k }{ek\,\, \varepsilon ^{k}}\bigg]
\end{equation}

Therefore, since   
$ \int _0^t  e^{- \varepsilon n^2  \tau/2}   \,\,d\tau \leq\frac{2}{\varepsilon n^2•} , $ there exists a positive constant $ C_2, $ independent of $ \varepsilon,  $ such that:

\begin{equation} \label{52}
\sum_{n=1}^{N_2-1}\frac{\varepsilon^2 n^2}{2 \tilde\omega_n•}\int _0^t  e^{- \varepsilon (n^2-1)  \tau/2}   \,\,\,\,d\tau \leq  C_2 \,( t \varepsilon^{3/2} + \varepsilon^{1/2-k}) 
\end{equation} 
where  $ 0<k<1/2. $

Other terms can be evaluated  taking  account of

\begin{equation}
\varepsilon \,\,\, \sum_{n=1}^{N_2-1•}  \, \int _0^t  e^{- \varepsilon (n^2-1)  \tau/2}   \,\,\,\,d\tau  \leq  \varepsilon t +\sum_2^{\infty}  \,\,  \frac{(2 \gamma)^\gamma}{e ^\gamma (1-\gamma)} \frac{ \varepsilon^{1-\gamma}\,t^{1-\gamma}}{(n^2-1)^{\gamma}}
\end{equation}
with $ 1/2 <\gamma<1. $

As for terms of  $  H_2, $ since (\ref{42}), it results:

\begin{eqnarray}
\nonumber |\partial_t H_2| \leq &\sum ^\infty _{n=N_2+1} \frac{e^{-(h_n-\omega_n)t}}{2\,\,n^2\, \omega_n•}  \big( |\omega_n- h_n| + (h_n +\omega_n) e^{-2\omega_nt }\big)+ 
\\
\nonumber &\sum ^\infty _{n=N_2+1}\,\frac{e^{-\frac{a•}{2•}t}}{n^2}\,( \frac{h_1}{\omega_0}+ 1) =   |\partial_t H'_2| +|\partial_t H''_2|.
\end{eqnarray}

\noindent Since   $ \forall n\geq N_2+1  $ function $ \frac{h_n}{\omega_n•} $   decreases, one has  $ \big( \frac{h_n}{\omega_n•}\big)_n \leq \big(\frac{h_n}{\omega_n•}\big)_{N_2+1} = \frac{1}{\sqrt{\varepsilon}•}\,\,\varphi\,\, $   where $ 0< \varphi \leq 9/(2\sqrt{2}) \,.  $  So, taking also into  account   (\ref{48}), it results:

\begin{equation}
 |\partial_t H'_2| \leq \sum ^\infty _{n=N_2+1} \frac{e^{- \,\frac{t}{2\varepsilon }}} {\,\,n^{1+k}\, }\, \frac{\,\,\varepsilon^{1/2-k}\,\,\varphi}{(1+\sqrt{1-a\varepsilon}+\varepsilon•)^{1-k}} 
\end{equation}

Moreover, it results as well:

\begin{eqnarray} \label{557}
|\partial_t H''_2|  \leq \sum ^\infty _{n=N_2+1} \frac{e^{-\frac{a•}{2•}t}}{n^{1+\eta}}\,( \frac{h_1}{\omega_0}+ 1)  \frac{ \varepsilon ^{1-\eta}}{ ( 1+\sqrt{1-a\varepsilon }+  \varepsilon )^{1-\eta}} 
\end{eqnarray}

\noindent with $ 0<\eta<1. $
So that, since (\ref{bb}) (\ref{52})-(\ref{557}), it is possible to find a positive  constant $ C \,, $ independent from $ \varepsilon, $  such that

\begin{eqnarray}
\label{cc}
\nonumber &\bigg| \partial _t \bigg[ \sum_{n=1}^{\infty} G^\varepsilon _n(t)\,- \, e^{\frac{-\varepsilon  t}{2}}  \sum_{n=1}^{\infty} 
 G_n^o(t)
  \bigg]\frac{ g_n}{n^2} \bigg| \leq C\{ e^{- \,\frac{t}{2\varepsilon }} \,\,\varepsilon^{1/2-k}+  
  \\ & e^{-\frac{a•}{2•}t}[ \varepsilon+\varepsilon^{1/2-k}+\varepsilon^{1-\eta}+ (\varepsilon t)^{1-\gamma}+2\varepsilon t]      \}                 
\end{eqnarray}

\noindent from which  (\ref{499}) follows.
\end{proof}

Finally, it must be observed  that, although the Green function $ G_\varepsilon (x,\xi,\varepsilon) $ satisfies theorems (\ref{th1}) and (\ref{th2}), a further theorem for the function $ G, $  defined in (\ref{daa}), must be proved. So that, one has:

\begin{theorem} \label{th9}

Whatever  $ 0< a <1  $ and  $ 0<\varepsilon <1 $  may be, there exists a  positive constant $ B , $   independent from $ \varepsilon,  $ such that:

\begin{equation}\label{588}
\varepsilon\,|   G(x,\xi,t)|\,\leq  \,B \,[\, (1+t)\,\varepsilon  ^{1/2-k} \,e^{- \frac{a}{2•}\,t}\, + \varepsilon^{1-\eta}  \,e^{-  \frac{\theta }{4\,}}\,\,]
\end{equation}

\noindent where $ 0<\eta<1, $   $ 0<k<1/2, $     and    $ \,\,\theta= t/\varepsilon. $

\end{theorem}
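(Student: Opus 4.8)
The plan is to estimate $\varepsilon |G(x,\xi,t)|$ directly from its Fourier series representation $G = \frac{2}{\pi}\sum_{n\ge 1} G_n^\varepsilon(t)\cos(n\xi)\cos(nx)$, where $G_n^\varepsilon(t) = \frac{1}{\omega_n}e^{-h_n t}\sinh(\omega_n t)$ as in $(\ref{18})$. Since $|\cos(n\xi)\cos(nx)|\le 1$, everything reduces to bounding $\varepsilon\sum_{n\ge 1}|G_n^\varepsilon(t)|$. Just as in the proof of Theorem~\ref{th7}, the sum splits at $n=N_2$ into a ``trigonometric part'' ($1\le n\le N_2$, where $\omega_n^2 = h_n^2-n^2<0$ so $G_n^\varepsilon$ is an oscillatory damped term) and a ``hyperbolic part'' ($n\ge N_2+1$). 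So the first step is to write $\varepsilon\sum_{n\ge 1}|G_n^\varepsilon| = \varepsilon(S_1 + S_2)$ with $S_1 = \sum_{n=1}^{N_2}|G_n^\varepsilon|$ and $S_2 = \sum_{n\ge N_2+1}|G_n^\varepsilon|$.

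For the hyperbolic part $S_2$, I would reuse the estimates $(\ref{46})$--$(\ref{48})$ verbatim: there $e^{-h_nt}\frac{\sinh(\omega_n t)}{\omega_n}\le t\,e^{-(h_n-\omega_n)t}\le t\,e^{-t/(2\varepsilon)}$, valid because $n\ge N_2+1>1/\varepsilon$. Hence $\varepsilon S_2 \le \varepsilon t\,e^{-t/(2\varepsilon)}\sum_{n\ge 1}\frac{1}{n^2}\cdot(\text{const})$ — wait, more carefully $G_n^\varepsilon$ itself (not divided by $n^2$) needs the extra $\frac{1}{\omega_n}$; but $\omega_n = h_n\sqrt{1-n^2/h_n^2}$ is bounded below by a multiple of $n^2\varepsilon$ for large $n$, giving summability. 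Then $\varepsilon t\,e^{-t/(2\varepsilon)} = \varepsilon^2\theta\,e^{-\theta/2}$, and by $(\ref{a})$ (with $\alpha=1/\beta$ suitably chosen, exactly as in $(\ref{417})$) one converts $\theta\,e^{-\theta/2}$ into a constant times $e^{-\theta/4}$, absorbing powers of $\varepsilon$ to land at $\varepsilon^{1-\eta}e^{-\theta/4}$ as required.

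For the trigonometric part $S_1$, I would bound $|G_n^\varepsilon(t)| = \frac{1}{\tilde\omega_n}e^{-h_n t}|\sin(\tilde\omega_n t)| \le \frac{1}{\tilde\omega_n}e^{-h_n t}$ with $\tilde\omega_n^2 = n^2 - h_n^2$, then use the lower bounds $\tilde\omega_n\ge n\,\Phi_n$ and $\tilde\omega_n\ge n\sqrt{\varepsilon}\,g_2$ for $1\le n\le N_2-1$ established in $(\ref{40})$, together with $h_n\ge a/2$, so that $e^{-h_nt}\le e^{-at/2}$. The factor $\varepsilon$ out front combined with $\frac{1}{\tilde\omega_n}\le \frac{1}{n\sqrt\varepsilon g_2}$ yields $\varepsilon\cdot\frac{1}{n\sqrt\varepsilon g_2} = \frac{\sqrt\varepsilon}{n g_2}$, and summing $\frac{1}{n}$ over $1\le n\le N_2$ produces a logarithm, which by the trick $\ln N_2 \le (\beta/e)N_2^{1/\beta}$ with $\beta = 1/k$ (used already in $(\ref{53})$) becomes $O(\varepsilon^{-k})$; combined with $\sqrt\varepsilon$ this gives $\varepsilon^{1/2-k}$, times $e^{-at/2}$, and the stray $(1+t)$ factor in $(\ref{588})$ is harmless slack one can carry from the crudest term (e.g. the $n=N_2$ term, handled separately as in the proof of Theorem~\ref{th8} via $\int_0^t\tau e^{-\varepsilon(N_2^2-1)\tau/2}d\tau\le\varepsilon/(1-a)$, or simply by noting $1\le 1+t$). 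Collecting both parts gives $\varepsilon|G|\le B[(1+t)\varepsilon^{1/2-k}e^{-at/2} + \varepsilon^{1-\eta}e^{-\theta/4}]$.

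The main obstacle I expect is the bookkeeping at the boundary index $n=N_2$ (and nearby indices $N_2-1$), where $\Phi_n$ degenerates and the clean lower bound $\tilde\omega_n\ge n\sqrt\varepsilon g_2$ is only asserted for $n\le N_2-1$: the term $n=N_2$ must be peeled off and controlled by a separate integral estimate, and it is precisely here that the $(1+t)$ factor is genuinely needed rather than being slack. Everything else is a recycling of the inequalities $(\ref{40})$, $(\ref{46})$--$(\ref{48})$, $(\ref{53})$ and the elementary bound $(\ref{a})$ already proved above, so the proof is essentially an exercise in reassembling those pieces with the extra factor $\varepsilon$ in front and without the helpful $1/n^2$ damping present in Theorems~\ref{th7} and~\ref{th8}.
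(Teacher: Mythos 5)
Your treatment of the trigonometric block $1\le n\le N_2$ matches the paper's proof: the lower bound $\tilde\omega_n\ge n\sqrt{\varepsilon}\,g_2$ of (\ref{40}) for $n\le N_2-1$, the logarithmic sum $\sum 1/n$ handled through (\ref{53}) to produce $\varepsilon^{1/2-k}$, and the last index $n=N_2$ peeled off (there $|\sin(\tilde\omega_{N_2}t)|/\tilde\omega_{N_2}\le t$), which is exactly where the $(1+t)$ factor of (\ref{588}) originates, cf. (\ref{aa}). The genuine gap is in the hyperbolic block. Your key claim there, that $\omega_n=h_n\sqrt{1-n^2/h_n^2}$ is bounded below by a multiple of $\varepsilon n^2$ ``for large $n$'' and that this gives summability over the whole range $n\ge N_2+1$, fails near the lower end of that range: $\omega_n$ vanishes at $n_+=(1+\sqrt{1-a\varepsilon})/\varepsilon$, so for $n=N_2+1$ and for the roughly $1/\varepsilon$ indices that follow it can be arbitrarily small (even zero), and no $n$-decaying bound via $1/\omega_n$ is available there; on the other hand the uniform bound $t\,e^{-t/(2\varepsilon)}$ coming from (\ref{46})--(\ref{48}) carries no decay in $n$ and cannot be summed over infinitely many terms. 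Hence ``split at $N_2$ and reuse (\ref{46})--(\ref{48})'' does not close the estimate by itself.

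The paper resolves precisely this point by introducing a second cutoff $N_c$, the integer part of $(1+\sqrt{1-a\varepsilon c})/(\varepsilon\sqrt c)$ for a fixed $0<c<1$. On the transition zone $N_2+1\le n\le N_c-1$ it does not use $1/\omega_n$ at all but counts the terms, $\varepsilon(N_c-N_2-1)\le 2/\sqrt c$, obtaining the bound $B_2\,(t/\varepsilon)\,e^{-t/(2\varepsilon)}$ of (\ref{uu}); the prefactor $\varepsilon$ together with the conversion $\theta e^{-\theta/2}\le (4/e)\,e^{-\theta/4}$ (as in (\ref{417}), via (\ref{a})) then turns this into an $O(\varepsilon\,e^{-\theta/4})$ contribution. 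Only for $n\ge N_c$ does one have $n/h_n\le\sqrt c$, hence $\omega_n\ge h_n\sqrt{1-c}\gtrsim \varepsilon n^2$, and there the $1/\omega_n$ decay (split as $n^{-(1-\eta)}n^{-(1+\eta)}$ with $\varepsilon N_c\ge 1/\sqrt c$) yields the $\varepsilon^{1-\eta}e^{-\theta/4}$ term of (\ref{588}). So your plan needs this extra splitting, or an equivalent device for the $O(1/\varepsilon)$ indices just above $N_2$ where $\omega_n\approx 0$; once that is added, the rest of your assembly, including the final absorption of the various powers of $\varepsilon$ into $\varepsilon^{1-\eta}e^{-\theta/4}$ and the $(1+t)\varepsilon^{1/2-k}e^{-at/2}$ term, goes through as in the paper.
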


Indeed,  for circular terms, since (\ref{40}) and  (\ref{53}), it is possible to find a positive constant $ B_1\,, $ independent from $ \varepsilon, $ such that:   

 \begin{equation} \label{aa}
\varepsilon \,\,\sum_{n=1}^{N_2} \,\frac{ e^{-h_n t}\,sen (\tilde\omega_n \,t)}{\tilde \omega_n } \leq B_1\,\, e^{-\frac{a\,t}{2}} ( \varepsilon^{1/2-k} +t \varepsilon)
\end{equation}

\noindent with $ 0<k<1/2.\, $  As for hyperbolic terms, let   $c$  be an  arbitrary positive constant less than $ 1, $  and  denote  by  $ N_c   $    the integer part of $ 1/( \varepsilon \sqrt{c}) (1+ \sqrt{1-a\varepsilon \,c}). $ So that,  intervals $ (N_{2}+1, N_{c}-1); ( N_{c}, \infty)   $  will be considered.

\noindent Since (\ref{46})-(\ref{48}), letting  $ B_2 = \varepsilon (N_c-N_2-1) \leq \frac{2}{ \sqrt{c}•},$ 
it results: 

\begin{equation} \label{uu}
\sum_{n=N_{2}+1}^{N_c-1}  e^{-h_n t} \,\frac{\sinh (\omega_n \,t)}{\omega_n} \leq\sum_{n=N_{2}+1}^{N_c-1} t e^{- \,\frac{t}{2\varepsilon }}\,\leq B_2 \,\, \frac{t}{\varepsilon•}\,\,e^{- \,\frac{t}{2\varepsilon }}
\end{equation}

Moreover, for all $  n \geq N_c,$ one has  $n \geq \frac{1}{ \varepsilon \sqrt{c}•} ( 1+ \sqrt{1- a \varepsilon c})\Rightarrow\varepsilon \sqrt{c} \,\,{n}^{2}  + a  \sqrt{c} - 2 {n} \geq 0, $ so that $ \frac{2{n}}{ a+\varepsilon n^2•} \leq\sqrt{c}\Leftrightarrow\frac{{n}}{h_{n}•} \leq\sqrt{c}  $ and, hence, $ \omega_n= h_n \sqrt{1-\frac{n^2}{h_n^2•}}   \geq  h_n\sqrt{1-c}\, $


As a  consequence,  if $ 0<\eta<1, $  denoting by  $ \zeta(x)   $   the Riemann zeta function and $\,\,\theta = t/\varepsilon,$ and since $ \varepsilon \,N_c\, \geq 1/\sqrt{c},  $ one has

\begin{equation}
\sum_{n=N_c}^\infty  e^{-h_n t} \,\frac{\sinh (\omega_n \,t)}{\omega_n} \leq \frac{ 2\,\,{c}^{(1-\eta)/2}}{ \sqrt{1-c}•} \,\,\frac {\varepsilon ^{-\eta} \,\,\zeta(1+\eta)}{( 1+ \sqrt{1- a \varepsilon c})^{1-\eta}}\, e^{\,-\frac{t}{2\varepsilon}} 
\end{equation}

\noindent So, there exists a  positive constant $ B_3  $, independent from $ \varepsilon, $such that, taking also  account of (\ref{aa})- (\ref{uu}), it results:
 
\begin{equation}\label{58}
\varepsilon\,|   G(x,\xi,t)|\,\leq  \,B_1 \,( \,t\varepsilon +\varepsilon ^{1/2-k}) \,e^{- \frac{a}{2•}\,t}\, + B_3 \,\, (\varepsilon^{1-\eta}+ \,\varepsilon)  \,e^{-  \frac{\theta }{4\,}}
\end{equation}
 from which,  inequality (\ref{588}) follows.

\section{A priori estimates}

By assuming that   function $ F_0(x), F_1(x)$ and $ F(x,t) $  belong to class $ C^2(\Omega), $   let

\begin{eqnarray}  \label{an73}
\nonumber &\big|\big|\,   F_i^{(j)}\big|\big| = \displaystyle \sup _{ 0\leq\,x\leq L}\, \big|  F_i^{(j)} (x)\big | \quad  \mbox{for}\,\, i= 0,1  \quad  j=0,2
\nonumber \\
 \nonumber &\,|| F\,|| \,= \displaystyle \sup _{ \Omega\,}\, \bigg|  \int_0^t  F(x,\tau ) dt  \,\bigg|,\qquad  || F_{xx}\,|| \,= \displaystyle \sup _{ \Omega\,}\, | \, F_{xx} \,(\,x,\,t\,) \,|.
\end{eqnarray} 

 Besides, indicating by    $ \Gamma(a,z) $ the incomplete  Gamma function and by  $ \zeta(\xi) $  the 
Riemann zeta function, let

\begin{eqnarray}
\nonumber &K_0= || F_0||+\frac{1}{a•} || F_1|| +\frac{1}{a•} ||  F||; \quad  \ K_1 =  B ||\ddot F_0||;
\\
\nonumber\\
&K_{2}=||\ddot F_1|| + a \,\,|| \ddot F_0||\,)\,\,\pi\,\,\,A;   \quad K_3 = || \ddot F_0||\,\zeta(2) \pi \,\quad  H= ||F_{xx}|| \pi A
\\  
\nonumber\\  
\nonumber &H_1 = ||F_{xx}|| \pi \,A \,\,[ 2/a +  4/a^2+ 4+(2/a)^{2-\gamma }• \Gamma(2-\gamma)  + (2/a)^{3-\delta }• \Gamma(3-\delta)]
\nonumber\\
\nonumber\\
&\nonumber k(t)= (2/a)^{2-\gamma }• \Gamma(2-\gamma, ta/2 )
+ (2/a)^{3-\delta }• \Gamma(3-\delta,ta/2)
\end{eqnarray}
 
\noindent where constant $ A $ and $ B  $ are referred  respectively to theorem \ref{th7} and \ref{th9}. Then,   if we  consider that

  \begin{equation}  \label{6666}
 \frac{d }{dt •}  ( e^{-\frac{\varepsilon }{2•}t} )\,\, \sum_{n=1}^\infty \,\int_0^\pi \ddot F_0 \, G_n^o\, \frac{g_n}{n^2•}  d\xi \leq  K_3 \, \varepsilon \,t \,e^{-\frac{a}{2•}t}\,
  \end{equation} 


\noindent where $  G_n^o\,  $ is defined in $(\ref{110})_2$,  according to (\ref{an225}) and  by means of  theorems \ref{th7}, \ref{th8}, \ref{th9}, the following theorem holds:

\begin{theorem} \label{th10}
When data $ (F_1,F_0,F) $ satisfy respectively the hypotheses of  theorem \ref{th:33} \ref{the:dati0}  \ref{t42}, then the following inequality holds:

 \begin{eqnarray}  \label{66}
 &\bigg |u(x,t,\varepsilon ) - e^{-\frac{\varepsilon  t }{2}} U(x,t)\bigg|  \leq  K_0  \bigg (1-  e^{-\frac{\varepsilon  }{2•}t} \bigg )+ \, \varepsilon ^m [H k(t) + H_1 ]+ 
 \nonumber\\ 
  \nonumber & + \varepsilon^m  e^{-\frac{a}{2•}t}\{ K_1 (1+t) +\, K_2 r(t)   +C p(t) + K_3t\}+
   e^{-\frac{t}{4 \varepsilon}}\varepsilon ^m ( C+  K_1+ K_2 ) 
\end{eqnarray}

\noindent where $ m  $  and $ p(t) \, $ are  defined in (\ref{51}), while  $ r(t)$ and $ \, C $ refer to theorem  \ref{th7} and \ref{th8}, respectively.

 \end{theorem}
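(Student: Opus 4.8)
The plan is to start from the explicit representation~(\ref{an225}) of $u(x,t,\varepsilon)$ and the analogous representation of $U(x,t)$ obtained by replacing $G$ by $G_o$ and the coefficients $G_n^\varepsilon$ by $G_n^o$; then to subtract $e^{-\varepsilon t/2}U(x,t)$ and organize the resulting difference term by term according to which datum ($F_0$, $F_1$, or $F$) each piece comes from. Concretely, I would write
\[
u(x,t,\varepsilon)-e^{-\frac{\varepsilon t}{2}}U(x,t)=\bigl(1-e^{-\frac{\varepsilon t}{2}}\bigr)\Bigl(\tilde F_0+\tfrac{1-e^{-at}}{a}\tilde F_1-\!\int_0^t\!\tfrac{1-e^{-a(t-\tau)}}{a}\tilde F(\tau)\,d\tau\Bigr)+R_1+R_0+R_F,
\]
where $R_1$ collects the $\sum F_1(\xi)[G_n^\varepsilon-e^{-\varepsilon t/2}G_n^o]$ contributions, $R_0$ the $(\partial_t+a-\varepsilon\partial_{xx})$-applied $F_0$ contributions, and $R_F$ the time-convolution $F$ contributions. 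The first bracketed term is bounded in absolute value by $K_0\bigl(1-e^{-\varepsilon t/2}\bigr)$ directly from the definitions of $K_0$ and the $\|\cdot\|$-norms in~(\ref{an73}), which accounts for the leading term on the right-hand side of~(\ref{66}).

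Next I would estimate $R_1$, $R_0$, $R_F$ one at a time using the three Green-function estimates already proved. For $R_1$: integrating by parts twice against $\dot F_1$ (as in~(\ref{224})) puts two $1/n$'s in, leaving $\sum[G_n^\varepsilon-e^{-\varepsilon t/2}G_n^o]g_n/n^2$, to which Theorem~\ref{th7} applies, yielding $\le \|\ddot F_1\|\pi A\,[\varepsilon^{1-\gamma}r(t)e^{-at/2}+\varepsilon e^{-\theta/4}]$, i.e.\ a $K_2 r(t)\varepsilon^m e^{-at/2}$ piece plus the $e^{-\theta/4}$ tail. For $R_0$ the term $\partial_t\!\int F_0 G\,d\xi$ needs Theorem~\ref{th8} (hence the appearance of $C\,p(t)$ and the $\varepsilon^m e^{-at/2}$, $\varepsilon^m e^{-\theta/4}$ structure), while the $-\varepsilon\partial_{xx}\!\int F_0 G\,d\xi=-\varepsilon\!\int \ddot F_0 G\,d\xi$ term is controlled by Theorem~\ref{th9}, giving the $K_1(1+t)\varepsilon^m e^{-at/2}$ piece; the auxiliary inequality~(\ref{6666}) is exactly what is needed to absorb the $\frac{d}{dt}(e^{-\varepsilon t/2})$ cross-term into $K_3 t\,\varepsilon\,e^{-at/2}$. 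Finally $R_F$ is a convolution in $t$ of (essentially) the kernel estimated in Theorem~\ref{th7} against $F_{xx}$; here the time integral $\int_0^t$ of $r(t-\tau)e^{-a(t-\tau)/2}$ produces incomplete Gamma functions — this is precisely where $H k(t)$ (the $\Gamma(\cdot,ta/2)$ terms) and the constant remainder $H_1$ (the full $\Gamma(\cdot)$ values together with the elementary $2/a$, $4/a^2$, $4$ contributions) come from, via the substitution $s=a(t-\tau)/2$ and $\int_0^\infty s^{\,\alpha-1}e^{-s}\,ds=\Gamma(\alpha)$.

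Collecting all pieces and bounding each occurrence of $\varepsilon^{1-\gamma}$, $\varepsilon^{1-\eta}$, $\varepsilon^{1/2-k}$, $\varepsilon^{1+\delta}$ by $\varepsilon^m$ with $m=\min(1-\gamma,1-\eta,1/2-k)$ as in~(\ref{51}) (legitimate since $0<\varepsilon<1$), and grouping the $e^{-at/2}$-weighted polynomial factors into $K_1(1+t)+K_2 r(t)+C p(t)+K_3 t$ and the $e^{-\theta/4}$-weighted constants into $C+K_1+K_2$, yields exactly~(\ref{66}). I expect the main obstacle to be the bookkeeping in $R_F$: one must check that after the $\tau$-integration the non-decaying part of the kernel estimate (the $1$ and $t-\tau$ terms inside $r(t-\tau)$, which are \emph{not} killed by $e^{-a(t-\tau)/2}$ on the whole half-line) integrates to the finite constants packaged in $H_1$, while the genuinely $t$-dependent leftover is exactly the incomplete-Gamma combination $k(t)$; getting the constants to match~(\ref{an73}) requires care but no new idea. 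A secondary point to verify is that the hypotheses $\dot F_i(0)=\dot F_i(\pi)=0$ and $F_{xx}\in C(\Omega)$ indeed license the integrations by parts used to pull the $1/n^2$ factors out, so that Theorems~\ref{th7}–\ref{th9} are applicable as stated.
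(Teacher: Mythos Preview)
Your proposal is correct and follows precisely the approach the paper intends: the paper's own ``proof'' of Theorem~\ref{th10} is merely the sentence ``according to (\ref{an225}) and by means of theorems \ref{th7}, \ref{th8}, \ref{th9}, the following theorem holds'', together with the auxiliary inequality~(\ref{6666}); you have supplied exactly the term-by-term bookkeeping that this sentence leaves implicit. One small point: in your treatment of $R_0$ you discuss the $\partial_t$ and $-\varepsilon\partial_{xx}$ pieces but not the $+a\int F_0\,G$ piece, which (after two integrations by parts and Theorem~\ref{th7}) is what contributes the $a\|\ddot F_0\|$ summand inside $K_2$; this is precisely the kind of bookkeeping you flagged, and it does not require any new idea.
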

   
Theorem \ref{th10} allows us to achieve  an estimate characterized  by means of   fast time $ \theta=t/\varepsilon $ and slow time $ \tau= t\varepsilon.  $

\noindent Indeed,   for $ 0< \delta < 1/2, $  since (\ref{a}), one has $ t^{2-\delta}\, e^{-\frac{a}{2•}t} \leq \frac{4(1-2\delta)•}{ae•}^{1-2\delta }\, t^{1+\delta} \, e^{-\frac{a}{4•}t}  . $

 \noindent Moreover, since  $ \forall z>0 $ it results $ \Gamma(a,z)\leq \Gamma(a), $   taking  account  of  (\ref{422}), (\ref{416}), (\ref{417}), 
 (\ref{cc}),
   (\ref{58}) and  (\ref{6666}) ,  it is possible to introduce three positive constants $  {\cal A}, {\cal B},{\cal C}, $ independent from $ \varepsilon,  $  such that it  results:

\begin{eqnarray}  \label{an73}
\nonumber 
 &\bigg |u(x,t,\varepsilon ) - e^{-\frac{\varepsilon t   }{2}\,} U(x,t)\big|\leq   K_1  \bigg (1-  e^{-\frac{\varepsilon   }{2}\,t} \bigg )+ {\cal A} \, \varepsilon ^m +  {\cal B} \,e^{-\frac{t}{4 \varepsilon•}}
\nonumber \\
\nonumber \\
& +  {\cal C}\,[\, \sqrt{\varepsilon t} + (\varepsilon t)^{1-\gamma} + (\varepsilon t)^{1+ \delta}+ t \varepsilon\,] e^{-\frac{a•}{4•}t}    
\end{eqnarray}

  Therefore, indicating by $ K $ a positive constant independent from $ \varepsilon $, when    $ t\in (0,\frac{1}{\varepsilon•}), $ one has: 

\begin{equation}
\bigg|u(x,t,\varepsilon ) - e^{-\frac{\varepsilon t }{2•}\,} U(x,t)\bigg|  \leq  \,\, K 
\end{equation}

\noindent and hence  for slow times $ \tau=\varepsilon t <1, $ the wave is propagated nearly unperturbed.

\noindent Moreover, when $ t>1/\varepsilon, $ damped oscillations prevail.


With regards  to asymptotic properties, it should be remarked that obviously hypotheses on the  source term have an  influence on asymptotic behaviours. If for instance, we suppose that $ F_{xx}(x,t)$  is bounded also when  $ t \rightarrow \infty $ and  $ \int _0^\infty F(x,t) dt <\infty  $ then, we obtain that the   solution $ u(x,t,\varepsilon)  $  is   bounded  when $ t  $ increases to infinity.

\end{document}